\theoremstyle{definition}
\newtheorem{definition}{Definition}
\theoremstyle{remark}
\newtheorem{remark}[definition]{Remark}
\theoremstyle{plain}
\newtheorem{theorem}[definition]{Theorem}
\newtheorem{lma}[definition]{Lemma}
\newtheorem{prop}[definition]{Proposition}
\newtheoremstyle{mystyle}
{}
{}
{}
{}
{\bfseries}
{}
{\newline}
{}
\DeclareMathOperator{\Sym}{Sym}
\DeclareMathOperator{\Pol}{Pol}
\def\A{\mathcal{A}}
\def\aux{\textup{aux}}
\def\BV{\mathrm{BV}}
\def\C{\mathbb{C}}
\def\diag{\textup{diag}}
\def\E{\mathcal{E}}
\def\ferm{\textup{ferm}}
\def\H{\mathcal{H}}
\def\off{\textup{off}}
\def\phi{\varphi}
\def\R{\mathbb{R}}
\def\tot{\textup{tot}}
\DeclareMathOperator{\tr}{Tr}
\def\Z{\mathbb{Z}}
\title[NCG and the BV formalism]{Noncommutative Geometry and the BV formalism: application to a matrix model}
\author{Roberta A. Iseppi}
\address{Max Planck Institute for Mathematics, Vivatsgasse 7, 53111 Bonn, Germany}
\email{iseppi@mpim-bonn.mpg.de}
\author{Walter D. van Suijlekom}
\address{Institute for Mathematics, Astrophysics and Particle Physics, Radboud University Nijmegen, Heyendaalseweg 135, 6525 AJ Nijmegen, The Netherlands}
\email{waltervs@math.ru.nl} 
\date{\today}
\begin{document}
\maketitle

\begin{abstract}
We analyze a $U(2)$-matrix model derived from a finite spectral triple. By applying the BV formalism, we find a general solution to the classical master equation. To describe the BV formalism in the context of noncommutative geometry, we define two finite spectral triples: the BV spectral triple and the BV auxiliary spectral triple. These are constructed from the gauge fields, ghost fields and anti-fields that enter the BV construction. We show that their fermionic actions add up precisely to the BV action. This approach allows for a geometric description of the ghost fields and their properties in terms of the BV spectral triple. 
\end{abstract}

\section{Introduction}
\label{Section: Introduction}
\noindent
Since the early days of noncommutative geometry \cite{Connes} it has been clear that this mathematical theory is strongly related to gauge theories in physics. Indeed, gauge theories are naturally induced by spectral triples, where the noncommutativity of the pertinent algebra naturally gives rise to non-abelian gauge groups. This has successfully been applied to Yang--Mills gauge theories \cite{CC96} and to the celebrated Standard Model of particle physics \cite{CCM07}. It is also clear that in the finite-dimensional case, when the algebras are matrix algebras, one obtains hermitian matrix models. 

A powerful method to analyze the nature of the gauge symmetries in gauge theories ---with the eventual purpose of understanding their rigorous quantization--- is the BRST formalism \cite{BRS,BRS3,T} and its far-reaching extension, the BV formalism \cite{BV1,BV2} ({\it cf.} \cite{Fior,GPS,Schw} for review articles). A first key ingredient in both of these formalisms are Faddeev-Popov {\it ghost fields} \cite{Faddev-Popov}, which are introduced to cancel the physically irrelevant gauge symmetries. The BV formalism then proceeds by introducing also so-called {\it anti-fields} for all previously defined gauge and ghost fields. Moreover, an extended action functional is defined as a solution to the so-called `classical master equation' ({\it cf.} Definition \ref{defn:bv} below).

\bigskip

We start this paper by recalling ({\it cf.} \cite{primo_articolo}) the result obtained by applying the BV formalism to a $U(2)$-matrix model, which is derived from a finite spectral triple on the algebra $M_2(\C)$. We find that the gauge structure of this model is richer than expected, requiring also the introduction of ghost-for-ghost fields. After having added the necessary anti-fields, we state the general form of the extended action that solves the classical master equation. Then, the construction is finished by determining the BV auxiliary pairs, which are essential in order to perform a gauge-fixing procedure. As such, our constructions fits nicely with previous studies of the BV formalism applied to gauge models derived from noncommutative geometry, such as \cite{beringgrosse,hanlonjoshi,huffel2001,huffel2002}. 

As a next step, we here define two spectral triples ---the BV spectral triple and the BV auxiliary spectral triple--- for which the fermionic action functionals sum up precisely to the so-called BV action functional, which is defined to be the difference between the extended action functional and the initial action. Thus we obtain a noncommutative geometric description of the BV formalism for this particular model, which by itself was derived from a spectral triple.

With this model we give the first description of the BV formalism completely in terms of noncommutative geometric data, that is to say, spectral triples. It serves as a guiding example for higher-rank, $U(n)$-matrix models and eventually for physically realistic gauge theories defined on a manifold. However, an analysis of these models goes beyond the scope of this paper and is left for future research. 

The paper is organized as follows. In Section \ref{Section: The noncommutative geometry setting} we quickly review the notion of a spectral triple and explain how gauge theories derive from it. Section \ref{BV sect} contains a concise overview of the BV formalism, geared towards our finite-dimensional case and essentially following \cite{felder} (see also \cite{primo_articolo}). 

In Section \ref{algorithm applied to matrix model} we recall what we obtained by applying the BV formalism to a $U(2)$-matrix model, understood as a gauge theory that is obtained from a spectral triple. Section \ref{chapter: NCG and the BV approach} is the heart of this paper: we construct a so-called BV spectral triple and BV auxiliary spectral triple and show that the sum of the corresponding fermionic actions coincides with the BV action functional. 

\subsection*{Acknowledgements}
The research presented in this article was partially supported by the Netherlands Organization for Scientific Research (NWO), through Vrije Competitie project number 613.000.910. The first author would like to thank the Max Planck Institute for Mathematics in Bonn, where the final stages of writing this article took place. Moreover, she would like to thank Matilde Marcolli for interesting comments and inspiring discussions and Klaas Landsman for useful remarks.

\section{Preliminaries}
\subsection{The noncommutative geometry setting}
\label{Section: The noncommutative geometry setting}
We recall the notion of a {\em spectral triple} and the construction of the canonically induced gauge theory 
 ({\it cf.}  \cite{C96}, \cite[Sect. 1.10]{marcolli} and \cite[Ch. 6]{walter}). 
This method will be later applied to a finite spectral triple that yields a $U(2)$-gauge theory, which we want to analyze using the BV formalism. 

  \begin{definition}
  \label{def spectral triple}
   A {\em spectral triple} $(\mathcal{A}, \mathcal{H}, D)$  consists of an involutive unital algebra $\mathcal{A}$, faithfully represented as operators on a Hilbert space $\mathcal{H}$, together with a self-adjoint operator $D$ on $\mathcal{H}$, with a compact resolvent,  such that 
   the commutators $[D, a]$ are bounded operators for each $a \in A$.  
  \end{definition}

\begin{remark}
The spectral triple $(\mathcal{A}, \mathcal{H}, D)$ is said to be {\em finite} if $\mathcal{H}$ is finite dimensional. 
By a classical result the algebra $\mathcal{A}$ in this case has to be a direct sum of matrix algebras, i.e.
$$ \mathcal{A}\simeq \bigoplus_{i=1}^{k}M_{n_{i}}(\mathbb{C})
 %,  \quad \quad\quad n_{1}, \dots, n_{k} \in \mathbb{N}.
 $$
 for positive integers $n_{1}, \dots, n_{k}$. 
Moreover, the required conditions on the self-adjoint operator $D$ are automatically satisfied in this finite-dimensional setting.
\end{remark}
   
  \begin{definition}
   An {\em  even spectral triple} $(\mathcal{A}, \mathcal{H}, D)$ is one in which the Hilbert space $\mathcal{H}$ is endowed with a $\mathbb{Z}/2$-grading $\gamma$, given by a linear map $\gamma: \mathcal{H} \to \mathcal{H}$, such that 
$$   D\gamma = -\gamma D  \quad \text{and} \quad   \gamma a = a \gamma 
$$
for all $a \in \mathcal{A}$.
  \end{definition}
  
\begin{definition}
\label{def real structure}
A {\em real structure of KO-dimension $n$ (mod 8)} on a spectral triple $(\mathcal{A}, \mathcal{H}, D)$ is  an anti-linear isometry $ J: \mathcal{H}\rightarrow \mathcal{H}$ that satisfies 
$$
J^2 = \epsilon  \quad \text{and} \quad  J D = \epsilon^{\prime} DJ
$$
together with the condition $$ J \gamma = \epsilon^{\prime \prime} \gamma J $$ in the even case. 
%$$J^2 = \epsilon, \quad \quad \quad \quad J D = \epsilon^{\prime} DJ,  \quad \quad \quad \quad J \gamma = \epsilon^{\prime \prime} \gamma J, \ \mbox{(even case)}.$$
The constants $\epsilon$, $\epsilon^{\prime}$ and $\epsilon^{\prime \prime}$  depend on the KO-dimension $n$ (mod 8) as follows:
$$
\begin{array}{|c|cccccccc|}
 \hline
 {n} & 0 & 1 & 2 & 3 & 4 & 5 & 6 & 7\\
 \hline
 \epsilon & 1 & \phantom{-}1 & -1 & -1 & -1 & -1 & \phantom{-}1 & 1\\
 \epsilon^{\prime} & 1 & -1 & \phantom{-} 1 & \phantom{-}1 & \phantom{-}1 & -1 & \phantom{-}1 & 1 \\ 
 \epsilon^{\prime \prime} & 1 & & -1 & & \phantom{-}1 & & -1 &\\
 \hline
\end{array}
$$
Moreover, we require for all $a,b\in \A$ that:
\begin{itemize}
\item[-] the action of $\mathcal{A}$ satisfies the {\em commutation rule}: $\big[a,  J b^{*}J^{-1}\big]=0$;
\item[-] the operator $D$ fulfills the {\em first-order condition}: $[ [ D, a ], J b^{*} J^{-1}] = 0$.
\end{itemize} 
When a spectral triple $(\mathcal{A}, \mathcal{H}, D)$ is endowed with such a real structure $J$, it is said to be a {\em real spectral triple} and denoted by $(\mathcal{A}, \mathcal{H}, D, J)$.
 \end{definition}

 Given a possibly real spectral triple, there are two notions of action functionals related to it: the {\em spectral action} and the {\em fermionic action} ({\it cf.} \cite{CC97,CC96,CCM07}).
  
\begin{definition}
\label{def spectral action}
For a finite spectral triple  $(\mathcal{A}, \mathcal{H}, D)$  and a suitable real-valued function $f$, the {\em spectral action} $S_{0}$ is given by
$$S_{0} [D+M]: = \tr \big( f(D + M)\big)
$$
with, as domain, the set of self-adjoint operators of the form $M = \sum_{j} a_{j} [D, b_{j}]$, for $a_{j}, b_{j} \in \mathcal{A}$.
\end{definition}

\begin{remark}
In the finite-dimensional setting, a family of suitable functions $f$ is given by the polynomials in $\Pol_\R(x)$. 
 \end{remark}

\begin{definition}
For  a   finite   spectral triple $(\mathcal{A}, \mathcal{H}, D)$  (finite {real} spectral triple $(\mathcal{A}, \mathcal{H}, D, J)$) 
%and a given Hilbert subspace $\mathcal{H}^{\prime} \subseteq \mathcal{H}$, 
the {\em fermionic action}  on $\mathcal{H}$ is given by
$$S_{\ferm}[\varphi] 
= \frac{1}{2} \langle \varphi , D \varphi \rangle 
\quad \Big(S_{\ferm}[\varphi] = \frac{1}{2} \langle J\varphi , D \varphi \rangle  \Big) ; \qquad (\varphi \in \H).
$$% for all $ \varphi \in \mathcal{H}$.
 \end{definition}

%Let $(\mathcal{A}, \mathcal{H}, D, (J))$ be a finite (real) spectral triple and fix a Hilbert subspace $\mathcal{H}^{\prime} \subseteq \mathcal{H}$. Then the {\em fermionic action} is defined on this subspace by:
%\begin{equation}
% \label{fermionic action}
%S_{\ferm}[\varphi] = \frac{1}{2} \langle (J)\varphi , D \varphi \rangle, \quad \quad \varphi \in \mathcal{H}^{\prime}.
% \end{equation}
% \end{definition}
%% \begin{remark}
%% In the real case, 
%% the choice of the subspace $\mathcal{H}^{\prime} \subseteq \mathcal{H}$ for a fermionic action also depends on the KO-dimension. Indeed, for a spectral triple of KO-dimension $2$ (mod $8$), the subspace $\mathcal{H}^{\prime}$ is taken to be the even part of $\mathcal{H}$, i.e.
%%  $$ \mathcal{H}^{\prime} = \mathcal{H}^{+}:= \big\{ \varphi \in \mathcal{H}, \ \gamma(\varphi) = \varphi \big\}$$
%%  together with the assumption that its elements are classical fermions (i.e. Grassmannian variables).
%% However, for other KO-dimensions, different subspaces $\mathcal{H}^{\prime}$ are possible (cf. Sect. \ref{chapter: NCG and the BV approach} below).
%% \end{remark}

\subsubsection{Gauge theories from spectral triples}
We recall the construction of the gauge theory naturally induced by a spectral triple, restricting to the finite-dimensional case. In this context, the appropriate  notion of a  gauge theory is as follows. 

\begin{definition}
For a real  vector space $X_{0}$ and a  real-valued functional  $S_{0}$  on $X_{0}$,  let 
$F\colon \mathcal{G} \times X_{0} \rightarrow X_{0}$ be a group action on $X_{0}$ for a given group $\mathcal{G}$.
Then the pair $(X_{0}, S_{0})$ is called a {\em gauge theory with gauge group $\mathcal{G}$} if 
 $$S_{0}(F(g, M)) = S_{0}(M)$$
for all $M \in X_{0}$ and $g \in \mathcal{G}$. The space $X_{0}$ is  referred to as the {\em configuration space}, an element $M \in X_{0}$ is called a {\em gauge field} and $S_{0}$ is the {\em action functional}. 
\end{definition}

Given this definition, the derived gauge theory for a finite spectral triple is obtained by the following standard result. 

\begin{prop}
\label{Prop: gauge theory from spectral triple}
For a  finite spectral triple $(\mathcal{A}, \mathcal{H}, D)$, let  
 $$X_{0} = \Big\{ M = \sum_{j} a_{j} \big[ D, b_{j}\big]:  M^{*} = M, \ a_{j}, b_{j} \in \mathcal{A}  \Big\} $$
 be the space of inner fluctuations, the group $\mathcal{G} $ be the unitary elements $\mathcal{U}(\mathcal{A}) = \{ u \in \mathcal{A}: u u^{*}= u^{*}u =1 \}$ of $\mathcal{A}$ acting on $X_0$ via the map 
$
%F:\mathcal{G}  \times  X_{0}  \to X_{0},\qquad
 (u, M) \mapsto  u M u^{*} + u[D, u^{*}] 
$,
and $S_0$ be the spectral action $$S_{0} [M]: = \tr \big( f(D + M)\big) ,$$ for any $M
\in X_0$ and some $f\in \Pol_{\mathbb{R}}(x)$.
Then the pair $(X_{0}, S_{0})$ is a gauge theory with gauge group $\mathcal{G}$.
%
%Given a finite spectral triple $(\mathcal{A}, \mathcal{H}, D)$ we set: 
%\begin{itemize}
% \item $X_{0}: = \big\{ M = \sum_{j} a_{j} \big[ D, b_{j}\big]:  M^{*} = M, \ a_{j}, b_{j} \in \mathcal{A}  \big\}$;
% \item $\mathcal{G} := \mathcal{U}(\mathcal{A}) = \big\{ u \in \mathcal{A}: u u^{*}= u^{*}u =1 \big\}$, acting on $X_0$ via the map 
%\begin{align*}
%F:\mathcal{G}  \times  X_{0}  \to X_{0},\qquad
% (u, M) \mapsto  u M u^{*} + u[D, u^{*}];
%\end{align*}
%\item $S_{0} [M]: = \tr \big( f(D + M)\big) $ with $f\in \Pol_{\mathbb{R}}(x)$.
%\end{itemize}
%Then the pair $(X_{0}, S_{0})$ is a gauge theory with gauge group $\mathcal{G}$.
\end{prop}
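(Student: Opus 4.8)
The plan is to reduce the whole statement to the single algebraic identity
$$ D + uMu^{*} + u[D,u^{*}] \;=\; u(D+M)u^{*}, \qquad u\in\mathcal{U}(\mathcal{A}),\ M\in X_{0}, $$
which is immediate from $uu^{*}=1$, since $u[D,u^{*}] = uDu^{*}-uu^{*}D = uDu^{*}-D$. Granting this, the action $F$ is nothing but the conjugation $D+M\mapsto u(D+M)u^{*}$ transported to $X_{0}$, and all the required properties become essentially formal. As preliminary remarks I would note that $X_{0}$ is a real vector space (sums and real multiples of elements of the stated form are again of that form and again self-adjoint) and that in the finite-dimensional setting $f(D+M)$ is just a polynomial in the self-adjoint matrix $D+M$, so $S_{0}[M]=\tr f(D+M)$ is a well-defined real number.

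Next I would check that $F(u,M):=uMu^{*}+u[D,u^{*}]$ indeed lands in $X_{0}$. Self-adjointness follows from the key identity, since $F(u,M)=u(D+M)u^{*}-D$ is a difference of self-adjoint operators. That $F(u,M)$ has the form $\sum_j a_j[D,b_j]$ with $a_j,b_j\in\mathcal{A}$ uses the Leibniz rule: the term $u[D,u^{*}]$ is already of this type, and writing $M=\sum_j a_j[D,b_j]$ together with $[D,b_j]u^{*} = [D,b_ju^{*}] - b_j[D,u^{*}]$ gives
$$ uMu^{*} = \sum_j ua_j[D,b_ju^{*}] - \sum_j ua_jb_j[D,u^{*}], $$
which is again a finite sum of terms $a[D,b]$. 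This closure check is the only point requiring a genuine (if routine) computation; I expect it to be the main, mild obstacle, everything else being bookkeeping.

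Then the group action axioms follow from $F(u,M)=u(D+M)u^{*}-D$: one has $F(1,M)=M$, and
$$ F(u,F(v,M)) = u\big(v(D+M)v^{*}\big)u^{*}-D = (uv)(D+M)(uv)^{*}-D = F(uv,M), $$
using $(uv)^{*}=v^{*}u^{*}$. Finally, for the invariance of $S_{0}$: since $f\in\Pol_{\R}(x)$ is a polynomial and conjugation by the unitary $u$ is an algebra automorphism, $f\big(u(D+M)u^{*}\big)=u\,f(D+M)\,u^{*}$, so by cyclicity of the trace
$$ S_{0}(F(u,M)) = \tr\, f\big(D+F(u,M)\big) = \tr\big(u\,f(D+M)\,u^{*}\big) = \tr\, f(D+M) = S_{0}(M). $$
Hence $(X_{0},S_{0})$ is a gauge theory with gauge group $\mathcal{G}=\mathcal{U}(\mathcal{A})$.
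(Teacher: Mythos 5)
Your proof is correct and complete: the reduction to the identity $D+F(u,M)=u(D+M)u^{*}$, the Leibniz-rule check that $uMu^{*}+u[D,u^{*}]$ stays in $X_{0}$, and the invariance of $\tr f(D+M)$ under unitary conjugation are exactly the standard argument. The paper itself offers no proof of this proposition, stating it as a standard result and deferring to the references, so there is nothing to contrast your approach with; your write-up supplies precisely the argument those references give.
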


\subsection{The BV approach to gauge theories}
\label{BV sect}
As already mentioned, starting with a gauge theory $(X_0,S_0)$, the BV construction is a procedure to determine a corresponding {\em extended theory} $(\widetilde{X}, \widetilde{S})$ via the introduction of {\em ghost/anti-ghost fields.} Here we outline the main aspects of the BV formalism, referring to \cite{felder,GPS,primo_articolo} and references therein for a more exhaustive presentation. 

For notational purposes, it is convenient to fix a basis for $X_0$ so that a gauge field $M \in X_0$ can be written as a vector $M =  (M_a)$, with $a =1, \ldots,n=\dim X_0$ and 
$$X_0 \simeq \langle M_1, \ldots, M_n \rangle_\R.$$
The presence of gauge symmetries in the action demands for the introduction of {\it ghost fields}. In order to determine the number of required ghost fields, one considers the relations $R_i^a$ ($i=1, \ldots,m_0$) between the partial derivatives $\partial_a S_0$ of the action functional with respect to $M_a$, i.e.
$$
(\partial_a S_0)R_i^a = 0.
$$
These relations $R_i^a$ are considered in $\mathcal{O}_{X_{0}}$, which is the ring of regular functions on $X_{0}$. Given each relation $R_i^a$ we introduce a {\it ghost field} $C_i$ for $i=1,\ldots, m_0$. 
It is useful to assign, for good book-keeping, a {\it ghost degree} $\deg(\phi) \in \mathbb Z$ and parity $\epsilon(\phi) \in \{ 0,1\}$ to the fields $\phi$ obtained so far, with $\epsilon(\phi) := \deg(\phi) (\mbox{mod } 2)$. The parity indicates whether the field is a real variable ($\epsilon=0$) or a Grassmannian, namely anti-commuting, variable ($\epsilon =1$). Naturally, we assign 
$$
\deg(M_a) = 0 , \qquad \deg(C_i)=1.
$$
However, it might happen that there are additional relations between the $R_i^a$  themselves. If this happens, the gauge theory is called \emph{reducible} and one has to add ghost-for-ghost fields, denoted by $E_j$, for each such relation-between-relations that appears. The ghost degree of $E_j$ is now $2$. This might continue to ghosts-for-ghosts-for-ghosts  all the way up to the `level of reducibility' $L$, which is the highest appearing ghost degree minus $1$. We refer {\it e.g.} to \cite{GPS} for full details. We denote the resulting configuration space as follows:
$$
\E := \langle M_1, \ldots, M_n \rangle_0 \oplus \langle C_1,\ldots, C_{m_0} \rangle_1 \oplus \langle E_1,\ldots, E_{m_1} \rangle_2 \oplus \cdots
$$ 
The key point to the BV formalism is the introduction of anti-fields for all previously introduced gauge fields, ghost fields, ghost-for-ghost fields, {\it et cetera}. For $\phi \in \E$ we denote the corresponding anti-field by $\phi^*$ and assign ghost degree:
$$
\text{deg}(\varphi^{*}) = - \text{deg}(\varphi) -1.
$$
This results in the vector space
$$
\E^*[1] := \cdots \oplus  \langle E_1^*,\ldots, E_{m_1}^* \rangle_{-3} \oplus \langle C_1^*,\ldots, C_{m_0}^* \rangle_{-2} \oplus \langle M_1^*, \ldots, M_n^* \rangle_{-1}
$$
which is modelled on the dual space $\E^*$,   where  the notation $[1]$ indicates the shift of degree by one, that is to say, %In addition, given a graded module $\mathcal{E}$, we denote by $\mathcal{E}^{*}[1]$ the corresponding dual module, shifted by $1$:
 $$\mathcal{E}^{*}[1] = \bigoplus_{i \in \mathbb{Z}}\big[\mathcal{E}^{*}[1]\big]^{i} \quad \quad \mbox{ with } \quad \quad \big[\mathcal{E}^{*}[1]\big]^{i} = [\mathcal{E}^{*}]^{i+1}.$$ 
The fields and anti-fields are combined into an \emph{extended configuration space} 
\begin{equation} \label{eqn:Xconfsp}
\widetilde X := \E \oplus \E^*[1],
\end{equation}
which has the structure of a super $\Z$-graded vector space. 
In view of this construction, the space of functionals on $\widetilde{X}$ is described by the {\em algebra  $\mathcal{O}_{\widetilde{X}}$ of regular functions on $\widetilde{X}$}, which is the symmetric algebra generated by the $\mathbb{Z}$-graded $\mathcal{O}_{X_{0}}$-module  $\widetilde{X}$ over the ring $\mathcal{O}_{X_{0}}$:
$$\mathcal{O}_{\widetilde{X}}=\Sym_{\mathcal{O}_{X_{0}}}(\widetilde{X}).$$
Due to the presence of a graded structure on $\widetilde{X}$, $\mathcal{O}_{\widetilde{X}}$ is naturally given a graded algebra structure. Moreover, the pairing between $\E$ and $\E^*[1]$ gives rise to a Poisson bracket structure $\{-,-\}$ of degree 1. Explicitly, the Poisson bracket is determined on generators as
$$
\{ \phi_i,\phi_j\} = 0, \qquad \{ \phi_i,\phi_j^*\} = \delta_{ij} ,\qquad \{ \phi_i^*,\phi_j^*\} = 0.
$$
As a final ingredient for the BV formalism, we come to the extension of the action functional $S_0$ to $\widetilde X$.

\begin{definition}
\label{defn:bv}
Let $(X_{0}, S_{0})$ be a gauge theory. Then an {\em extended theory} associated to $(X_{0}, S_{0})$ is a pair $(\widetilde{X}, \widetilde{S})$, where $\widetilde X$ is a super $\mathbb{Z}$-graded vector space as in \eqref{eqn:Xconfsp}, for $\mathcal{E}$ a $\mathbb{Z}_{\geqslant 0}$-graded locally free $\mathcal{O}_{X_{0}}$-module with homogeneous components of finite rank such that $[\mathcal{E}]^{0} = X_{0}$, and $\widetilde S$ is a $0$-degree element in $\mathcal{O}_{\widetilde{X}}$ such that $\widetilde{S}|_{X_{0}} = S_{0}$, with $\widetilde{S} \neq S_{0}$,
and that solves the `classical master equation' $\{\widetilde{S}, \widetilde{S}\}=0$.
% (i.e. the `classical master equation').

%with $\widetilde X$ as in \eqref{eqn:Xconfsp}above and the extended action $\widetilde S$ as a $0$-degree element in $\mathcal{O}_{\widetilde{X}}$ such that 
%$\widetilde{S}|_{X_{0}} = S_{0}$, with $\widetilde{S} \neq S_{0}$,
%and $\{\widetilde{S}, \widetilde{S}\}=0$ (i.e. the `classical master equation').
%%  where: 
%% \begin{enumerate}[$\blacktriangleright$]
%%  \item The {\em extended configuration space} $\widetilde{X}$ is a super $\mathbb{Z}$-graded vector space, whose $0$-degree homogeneous component is $[\widetilde{X}]^{0}=X_{0}$ and which is suitable of the following decomposition:
%%  \begin{equation}
%%  \label{Eq. decomposizione conf esteso}
%%  \widetilde{X} = \mathcal{E} \oplus \mathcal{E}^{*}[1],
%%  \end{equation}
%% for $\mathcal{E}$ a $\mathbb{Z}_{\geqslant 0}$-graded locally free $\mathcal{O}_{X_{0}}$-module with homogeneous components of finite rank. Moreover, $\mathcal{E}$ describes the fields/ghost fields content of $\widetilde{X}$ while $\mathcal{E}^{*}[1]$ represents the antifields/antighost fields part.
%% \item The {\em extended action} $\widetilde{S}$ is a $0$-degree element in $\mathcal{O}_{\widetilde{X}}$, that is, $\widetilde{S} \in [\mathcal{O}_{\widetilde{X}}]^{0}$, with $\mathcal{O}_{\widetilde{X}}$ the algebra of regular functions on $\widetilde{X}$. Moreover, $\widetilde{S}$ is required to satisfy:
%\begin{itemize}
%\item $\widetilde{S}|_{X_{0}} = S_{0}$, with $\widetilde{S} \neq S_{0}$;
%\item $\{\widetilde{S}, \widetilde{S}\}=0$ (the `classical master equation').
%\end{itemize} 
\end{definition}
We refer to the difference $S_{\BV} =\widetilde{S} - S_{0}$ as the {\em BV action} of the extended theory $(\widetilde{X}, \widetilde{S})$. Note also that, even though each homogeneous component of the graded vector space $\widetilde{X}$ is taken to be finite-dimensional, there is no hypothesis on the number of non-trivial homogeneous components in $\widetilde{X}$ which may be infinite. 

\begin{definition}
Given an extended theory $(\widetilde{X}, \widetilde{S})$, the induced {\em classical BRST cohomology complex} is $(\mathcal{C}^{\bullet}(\widetilde{X}), d_{\widetilde{S}})$, where 
$$
\mathcal{C}^{i}(\widetilde{X}) = [\Sym_{\mathcal{O}_{X_{0}}}(\widetilde{X})]^{i} 
 \qquad  (i \in \Z)
$$
and $d_{\widetilde{S}}:= \{\widetilde{S}, - \}$ is  the coboundary operator.
\end{definition}
The fact that the map $d_{\widetilde{S}}$ defines a linear and graded-derivative operator of degree $1$ over $\mathcal{O}_{\widetilde{X}}$ is a consequence of the properties of the Poisson bracket, whereas $(d_{\widetilde{S}})^{2} =0$ follows from the (graded) Jacobi identity and the fact that $\widetilde{S}$ solves the classical master equation. 

We now describe the gauge-fixing of our gauge theory in the context of the BV formalism. This essentially comes down to removing the anti-fields in the action $\widetilde{S}$; a key role in this construction is played by the choice of a {\em gauge-fixing fermion}.

\begin{definition}
\label{def gauge fixing fermion}
Let $\widetilde{X}= \mathcal{E} \oplus \mathcal{E}^{*}[1]$ be the above extended configuration space. A {\em gauge-fixing fermion} $\Psi$ is defined to be a Grassmannian function $ \Psi \in [\mathcal{O}_{\mathcal{E}}]^{-1}$. 
\end{definition}

From this, given an extended theory $(\widetilde{X}, \widetilde{S})$ together with a gauge-fixing fermion $\Psi$, the corresponding {\em gauge-fixed theory} is a pair $(\widetilde{X}_{\Psi}, \widetilde{S}_{\Psi})$
such that 
$\widetilde{X}_{\Psi} = \mathcal{E}$, where $\mathcal{E}$ is the subspace generated by fields and ghost fields, and 
$$\widetilde{S}_{\Psi} = \widetilde{S}(\varphi_{i}, \varphi_{i}^{*} =\tfrac{\partial \Psi}{\partial \varphi_{i}})$$ so  that $\widetilde{S}_{\Psi} \in [\mathcal{O}_{\mathcal{E}}]^{0}$. 
% with:
%\begin{itemize}
% \item $\widetilde{X}_{\Psi} = \mathcal{E}$, where $\mathcal{E}$ is the subspace generated by fields and ghost fields;
% \item $\widetilde{S}_{\Psi} = \widetilde{S}(\varphi_{i}, \varphi_{i}^{*} =\frac{\partial \Psi}{\partial \varphi_{i}})$ so that $\widetilde{S}_{\Psi} \in [\mathcal{O}_{\mathcal{E}}]^{0}$. 
% \end{itemize}
% \noindent
Given an extended theory $(\widetilde{X}, \widetilde{S})$, the gauge-fixing procedure a priori is not directly applicable, because all the fields/ghost fields in $\widetilde{X}$ have non-negative ghost degree, which impedes the definition of a gauge-fixing fermion for the theory. A solution to this problem was first discovered by Batalin and Vilkovisky \cite{BV1,BV2} who suggested the introduction of {\em auxiliary fields} of negative ghost degree. This is done using so-called {\em trivial pairs}, consisting of fields $B, h$ whose ghost degrees satisfy 
$$
\begin{array}{lr}
\text{deg}(h)= \text{deg}(B) +1 .
% \quad \quad %& \quad \quad \epsilon(h)= \epsilon(B) + 1 \quad  (\mbox{mod } \mathbb{Z}/2). 
\end{array}
$$
Given a trivial pair $(B, h)$, the ghost degrees of the corresponding anti-fields $(B^{*}, h^{*})$ are then related by
$$\begin{array}{lr}
\text{deg}(h^{*})= \text{deg}(B^{*}) - 1.
\end{array}
$$

\begin{definition}
For an extended theory $(\widetilde{X}, \widetilde{S})$ and a trivial pair $(B, h)$, the corresponding {\em total theory} is a pair $(X_{\tot}, S_{\tot})$, where the total configuration space $X_{\tot}$ is the $\mathbb{Z}$-graded vector space generated by $\widetilde{X}$, $B$, $h$ together with the corresponding anti-fields $B^{*}$, $h^{*}$, and the total action is $S_{\tot}= \widetilde{S} + S_{\aux} $ with  $\ S_{\aux}= h B^{*}.$
%\begin{itemize}
% \item $X_{\tot}$ is the $\mathbb{Z}$-graded vector space generated by $\widetilde{X}$, $B, h$ and their corresponding anti-fields $B^{*}, h^{*}$;
%\item $S_{\tot}= \widetilde{S} + S_{\aux},\ $ with $\ S_{\aux}= h B^{*}.$
%\end{itemize}
\end{definition}

In other words, the functional $S_\tot$ is in  the algebra of functionals  $\mathcal{O}_{X_{\tot}}$ on $X_\tot$ that is obtained along the same lines as $\mathcal{O}_{\widetilde{X}}$. Moreover, this algebra carries a graded Poisson structure, determined by the bracket on $\mathcal{O}_{\widetilde{X}}$ and
$$\{ B, B^{*}\} = \{ h, h^{*}\} =1,$$
with all other combinations of the $B,h,B^*,h^*$ among themselves and with other fields being zero. The fact that $S_\tot$ does not depend on $h^*$ or $B$ implies that also $S_\tot$ satisfies the classical master equation and, furthermore, that
$$
\{ \widetilde  S, - \} = \{ S_\tot,-\} 
$$
when we consider $\mathcal{O}_{\widetilde{X}}$ as a subalgebra of $\mathcal{O}_{X_{\tot}}$. In fact, it follows that the classical BRST cohomology complex $(\mathcal{C}^{\bullet}(\widetilde{X}), d_{\widetilde S})$ is quasi-isomorphic to the complex $(\mathcal{C}^{\bullet}(X_{\tot}), d_{S_{\tot}})$, where
$$
\mathcal{C}^{\bullet}(X_{\tot}) = [\Sym_{\mathcal{O}_{X_{0}}}(X_{\tot})]^{i}  \quad (i \in \mathbb{Z})
$$
and $d_{S_{\tot}}: = \big\{ S_{\tot}, - \big\}$. This is the reason for the terminology {\it trivial pairs}. Batalin and Vilkovisky showed in  \cite{BV1,BV2}  that the number of trivial pairs that need to be introduced is determined by the aforementioned level of reducibility $L$ of the gauge theory. 

\begin{theorem}
\label{teorema auxiliary fields caso generale}
Let $(\widetilde{X}, \widetilde{S})$ be an extended theory for a gauge theory with level of reducibility $L$. Then $\widetilde{X}$ is enlarged to give $X_\tot$ by introducing a collection of trivial pairs   $\{ (B_{i}^{j}, h_{i}^{j}) \}$  for $i= 0, \dots, L$ and $j=1,\dots, i+1$  such that $\deg(B_{i}^{j}) = j -i -2 $ if $j$ is odd, or $\deg(B_{i}^{j}) = i -j +1$ if $j$ is even.
%\begin{itemize}
%\item if $j$ is odd: $deg(B_{i}^{j}) = j -i -2 $;
%\item if $j$ is even: $deg(B_{i}^{j}) = i -j +1.$
%\end{itemize}
\end{theorem}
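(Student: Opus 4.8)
The plan is to argue by induction on the level of reducibility $L$, following the original analysis of Batalin and Vilkovisky \cite{BV1,BV2} while tracking all ghost degrees in the present notation. The guiding principle throughout is that the collection of trivial pairs to be adjoined is precisely the one for which a \emph{proper} gauge-fixing becomes possible: namely, one must be able to choose a gauge-fixing fermion of ghost degree $-1$, built from the fields, ghosts and the newly introduced auxiliary fields, so that the resulting gauge-fixed action $\widetilde{S}_{\Psi}$ has, at the relevant critical locus, a Hessian of maximal rank compatible with $\{S_{\tot},S_{\tot}\}=0$. Recall that the minimal sector of $(\widetilde{X},\widetilde{S})$ contains a tower $C_{0},C_{1},\ldots,C_{L}$ of ghost-for-$\cdots$-ghost fields with $\deg(C_{i})=i+1$. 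The trivial pairs will be organized into $L+1$ \emph{staircases}, the $i$-th one attached to $C_{i}$ and consisting of $i+1$ pairs $(B_{i}^{j},h_{i}^{j})$ with $j=1,\ldots,i+1$ and $\deg(h_{i}^{j})=\deg(B_{i}^{j})+1$.

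For the base case $L=0$ the theory is irreducible, the minimal sector has a single ghost $C_{0}$ of degree $1$, and the ordinary Faddeev--Popov prescription adjoins one trivial pair: an antighost $B_{0}^{1}$ of degree $-1$ and a Nakanishi--Lautrup multiplier $h_{0}^{1}$ of degree $0$. This matches $\deg(B_{0}^{1})=-1=1-0-2$ (with $j=1$ odd) and $\deg(h_{0}^{1})=0$; that a generic $\Psi$ linear in $B_{0}^{1}$, with a degree-$0$ coefficient encoding the gauge condition, produces a proper $\widetilde{S}_{\Psi}$ is the familiar non-degeneracy underlying standard gauge-fixing.

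For the inductive step, assume the statement at reducibility level $L-1$, so that the staircases attached to $C_{0},\ldots,C_{L-1}$ are already in place; a level-$L$ theory carries in addition the top field $C_{L}$ of degree $L+1$. Gauge-fixing $C_{L}$ forces the introduction of an antighost of degree $-(L+1)$, but such a field itself has negative ghost degree and, through the reducibility relations one stage down, drags in an ``extraghost'' whose degree is closer to zero, which in turn must be gauge-fixed, and so on; the analysis of \cite{BV1,BV2} shows this recursion terminates after exactly $L+1$ new trivial pairs, in which the odd-$j$ slots are antighost-type (of negative $B$-degree) and the even-$j$ slots are extraghost/multiplier-type (of non-negative $B$-degree). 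Reading the degrees off this staircase gives $\deg(B_{L}^{j})=j-L-2<0$ for $j$ odd and $\deg(B_{L}^{j})=L-j+1\geq 0$ for $j$ even, together with $\deg(h_{L}^{j})=\deg(B_{L}^{j})+1$; the degrees of the associated anti-fields then follow from $\deg(\varphi^{*})=-\deg(\varphi)-1$ and satisfy $\deg((h_{L}^{j})^{*})=\deg((B_{L}^{j})^{*})-1$, as required. Combining the staircases over all levels yields $\sum_{i=0}^{L}(i+1)=\binom{L+2}{2}$ trivial pairs.

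The main obstacle is not this degree bookkeeping, which is routine once the staircase is fixed, but the claim that exactly these trivial pairs are needed --- neither fewer nor of different degrees. For the ``enough'' direction one must exhibit a concrete $\Psi$, built by pairing consecutive fields along each staircase with appropriate gauge conditions, and check that the Hessian of $\widetilde{S}_{\Psi}$ attains maximal rank; for the ``not fewer'' direction one argues that any reduction of the field content leaves a residual degeneracy. Both require controlling, simultaneously at all $L+1$ ghost levels, the interaction between the anti-field (Koszul--Tate) part of $d_{\widetilde{S}}$, which resolves the on-shell functions along the anti-field directions, and the gauge (longitudinal) part; this is precisely where the quasi-isomorphism $(\mathcal{C}^{\bullet}(\widetilde{X}),d_{\widetilde{S}})\simeq(\mathcal{C}^{\bullet}(X_{\tot}),d_{S_{\tot}})$ recorded above enters, now invoked for the full collection of trivial pairs. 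Granting this, the theorem follows.
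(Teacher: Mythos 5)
The paper never proves Theorem \ref{teorema auxiliary fields caso generale}: it is stated as a result quoted from Batalin and Vilkovisky \cite{BV1,BV2}, right after the sentence attributing to them the determination of the trivial pairs by the level of reducibility. So there is no internal argument to compare yours against; the question is whether your proposal stands as a proof on its own. Your degree bookkeeping is correct and consistent with the one instance the paper works out: for $L=1$ your staircases reproduce the pairs $(B_i,h_i)$ with $\deg(B_i)=-1$, $\deg(h_i)=0$ attached to the degree-$1$ ghosts, and the pairs $(A_1,k_1)$, $(A_2,k_2)$ with degrees $(-2,-1)$ and $(0,1)$ attached to $E$, exactly as in Section \ref{algorithm applied to matrix model}; the inequalities $j-L-2<0$ for $j$ odd, $L-j+1\geq 0$ for $j$ even, the count $\binom{L+2}{2}$, and the relation $\deg(h^{*})=\deg(B^{*})-1$ all check out.

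As a proof, however, the proposal has a genuine gap, and you concede it yourself. The two claims carrying all the content --- that the recursion triggered by gauge-fixing $C_L$ terminates after exactly $L+1$ new pairs with the stated alternation of antighost and extraghost degrees, and that this collection is both sufficient (a degree $-1$ gauge-fixing fermion with a maximal-rank Hessian exists) and necessary (fewer pairs leave a residual degeneracy) --- are not argued but deferred: the former to ``the analysis of \cite{BV1,BV2}'', which is precisely the source being re-proved, and the latter to a paragraph ending with ``Granting this, the theorem follows.'' To close the gap you would need to exhibit the explicit gauge-fixing fermion pairing consecutive entries of each staircase and verify the rank condition on the Hessian of $\widetilde{S}_{\Psi}$, or run the Koszul--Tate/longitudinal decomposition of $d_{S_{\tot}}$ that underlies the quasi-isomorphism you invoke. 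Given that the paper itself treats the theorem as a citation, deferring to \cite{BV1,BV2} may be acceptable in context, but the text as written presents itself as an inductive proof and the inductive step is asserted rather than established.
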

After implementing the gauge-fixing, the pair $(X_{\tot}, S_{\tot})|_{\Psi}$ may still induces a cohomology complex, which is called {\em gauge-fixed BRST cohomology complex}. In fact, while this always happens if the theory is considered on shell, the existence of this cohomology complex in the off-shell case depends on the explicit form of the action $\widetilde{S}$. For completeness, we give its definition.

\begin{definition}
\label{definition of gauge-fixed BRST cohomology}
For a gauge-fixed theory  {\small{$(X_{\tot}, S_{\tot})|_{\Psi}$}} with {\small{$X_\tot = Y \oplus Y^*[1]$}}, the corresponding {\em gauge-fixed BRST cohomology complex} is $(\mathcal{C}^{\bullet}(Y), d_{S_{\tot}}|_{\Psi})$, where 
$$
\mathcal{C}^{k}(Y) = [\Sym_{\mathcal{O}_{X_{0}}}(Y)]^{k} \qquad (k \in \mathbb{Z})
$$ 
and the coboundary operator is given by $d_{S_{\tot}}|_{\Psi} := \big\{ S_{\tot}, - \big\}|_{\Sigma_{\Psi}}$ for the %Lagrangian
 submanifold  $\Sigma_{\Psi}$  of $X_{\tot}$ defined by the gauge-fixing conditions $\varphi_{i}^{*} = \frac{\partial \Psi}{\partial \varphi_{i}}.$
\end{definition}

\section[The BV construction applied to a U(2)-model]{The BV construction applied to a $U(2)$-model}
\label{algorithm applied to matrix model}
\noindent
The BV construction, reviewed in the previous section, will now be applied to a gauge theory naturally induced by a finite spectral triple on the algebra $M_{n}(\mathbb{C})$. Indeed, by the construction of Proposition \ref{Prop: gauge theory from spectral triple}, we have that the finite spectral triple 
$$ (M_{n}(\mathbb{C}), \mathbb{C}^{n}, D), $$
for an hermitian $n \times n$-matrix $D$,  yields a gauge theory $(X_{0}, S_{0})$ with gauge group $\mathcal G$ such that 
$$X_0=\{ M \in M_n(\mathbb C): M^{*} = M \},
\quad  S_0[M] = \tr f (M)
 \quad  \text{and} \quad  \mathcal G = U(n),$$
with $f$ a polynomial in $\Pol_{\mathbb{R}}(x) $ and the adjoint action of $\mathcal G$ on $X_0$. For simplicity, we will analyze the result of applying the BV construction on this model for $n=2$. To proceed with the construction, first fix a basis for $X_{0}$ given by Pauli matrices (together with the identity matrix):
\begin{equation}
\sigma_{1}= \begin{pmatrix}
		    0 & 1\\
		    1 & 0
	\end{pmatrix}
, \quad 
\sigma_{2}= \begin{pmatrix}
		    0 & -i\\
		    i & 0
		\end{pmatrix}
, \quad 
\sigma_{3}= \begin{pmatrix}
		    1 & 0\\
		    0 & -1
		\end{pmatrix}	 
, \quad
\sigma_{4}= \begin{pmatrix}
		    1 & 0\\
		    0 & 1
		\end{pmatrix}.
\label{basis}
\end{equation}
%Thus $X_{0}\cong \mathbb{A}^{4}_{\mathbb{R}},$ with $\mathbb{A}^{4}_{\mathbb{R}}$ the $4$-dimensional real affine space. 
Denoting by $\left\lbrace M_{a}\right\rbrace_{a=1}^4$ the dual basis of $\left\lbrace\sigma_{a}\right\rbrace_{a=1}^4$, $X_{0}$ is isomorphic to a $4$-dimensional real vector space generated by four independent initial fields:
$$X_{0}\simeq \langle M_{1}, M_{2}, M_{3}, M_{4}\rangle_{\mathbb{R}}.$$
\noindent 
Hence the ring of regular functions on $X_{0}$ is the ring of polynomials in the variables $M_{a}$, $\mathcal{O}_{X_{0}}= \Pol_{\mathbb{R}}(M_{a})$. In terms of the coordinates $M_{a}$ the spectral action $S_{0}$, defined by a polynomial $f=\sum_{i=0}^{r}\mu_{i}x^{i}$, takes the following explicit form:
$$\begin{array}{ll}
 S_{0}\hspace{-2mm}& = 2 \left[ \sum_{a=0}^{\lfloor r/2\rfloor} \mu_{2a}\left( \sum_{s=0}^{a}\binom{2a}{2s} (M_{1}^2 + M_{2}^2 + M_{3}^ 2)^{a-s} M_{4}^{2s}\right)\right. \\
[1.5ex] & \quad+ \left. \sum_{a=0}^{\lceil r/2\rceil -1} \mu_{2a+1}\left( \sum_{s=0}^{a}\binom{2a+1}{2s+1} (M_{1}^2 + M_{2}^2 + M_{3}^ 2)^{a-s} M_{4}^{2s+1}\right)
\right] .
\end{array}
$$
However, an action $S_{0}$ of this type only represents a family of $U(2)$-invariant functionals on $X_{0}$. In fact, the most general form for a functional $S_{0}$ on $X_{0}$ that is invariant under the adjoint action of the gauge group $U(2)$ is as symmetric polynomial in the eigenvalues $\lambda_1, \lambda_2$ of the variable $M \in X_0$, or, equivalently, as polynomial in the symmetric elementary polynomials $a_1 = \lambda_1+\lambda_2$ and $a_2=\lambda_1 \lambda_2$. In terms of the coordinates $M_a$ we have:
$$
\lambda_i= M_{4}\pm \sqrt{M_{1}^2+M_{2}^2+M_{3}^2}, \quad
a_{1}= 2M_4,  \quad
a_{2}=  M^{2}_{4}- (M_{1}^2+M_{2}^2+M_{3}^2).
$$
Hence the generic form for a $U(2)$-invariant action $S_0 \in \Pol_{\mathbb{R}}(M_{a})$ is
\begin{equation}
S_{0} = \sum_{k=0}^{r} \mbox{ }(M_{1}^2 + M_{2}^2 + M_{3}^2)^k g_{k}(M_4),
\label{S_0 generale}
\end{equation}
where $g_k(M_{4})\in \Pol_{\mathbb{R}}(M_4)$. Because the introduction of extra (non-physical) fields is motivated by the necessity of eliminating the symmetries in the action functional $S_0$, the BV construction may give rise to different extended configuration spaces $\widetilde{X}$, depending on the explicit form of $S_{0}$. For our $U(2)$-matrix model we have three different cases: 
\begin{enumerate}
\item If {\small{$S_{0} \in \Pol_{\mathbb{R}}(M_{4})$}}, there are no symmetries that need to be removed by adding ghost fields. Hence, the construction of $\widetilde{X}$ stops at the first stage, after including the anti-fields corresponding to the initial fields in $X_{0}$:
{\small{$$\widetilde{X} = X_{0} \oplus \langle M^{*}_{1}, M^{*}_{2}, M^{*}_{3}, M^{*}_{4}\rangle_{-1}.$$}}
\item If {\small{$GCD(\partial_{1}S_{0},  \partial_{2}S_{0}, \partial_{3}S_{0}, \partial_{4}S_{0}) = 1$}}, three independent ghost fields $C_{1}$, $C_{2}$, $C_{3}$ are inserted to compensate for the three independent relations existing over $\mathcal{O}_{X_{0}}$ between pairs of partial derivatives of $S_{0}$:\vspace{2mm}\\
{\small{$M_{1}({\partial_{2}}S_{0}) =M_{2}({\partial_{1}}S_{0}),  \ \
M_{1}({\partial_{3}}S_{0})=M_{3}({\partial_{1}}S_{0}),  \ \   M_{2}({\partial_{3}}S_{0})=M_{3}({\partial_{2}}S_{0}).\vspace{2mm}$}}
\\
After having eliminated these three symmetries, there is still one relation that involves all three terms $\partial_{1}S_{0}$, $\partial_{2}S_{0}$, and $\partial_{3}S_{0}$. Hence, we have to add a ghost field $E$ of ghost degree $2$.
\item If {\small{$GCD(\partial_{1}S_{0}, \partial_{2}S_{0}, \partial_{3}S_{0}, \partial_{4}S_{0}) = D \notin \mathbb{R}$}}, the action $S_{0}$ presents additional symmetries to cancel and so the extended configuration space $\widetilde{X}$ has to be further enlarged, obtaining that
\vspace{1mm}\\
{\small{$\begin{array}{ll}
\widetilde{X} = & \langle K^* \rangle_{-4} \oplus \langle E^{*}_{1}, \dots, E^*_4\rangle_{-3} \oplus \langle C^{*}_{1}, \cdots, C^{*}_{6}\rangle_{-2} \oplus \langle M^*_{1}, \dots, M^*_{4}\rangle_{-1}\\
[.8ex]
&  \oplus X_{0} \oplus \langle C_{1}, \cdots, C_{6} \rangle_{1} \oplus \langle E_{1}, \dots, E_4\rangle_{2} \oplus \langle K\rangle_{3}.
\end{array}$}}
\end{enumerate} 
\noindent
Here we focus on the generic situation (2), for which we have the following result ({\it cf.} \cite{primo_articolo}). 
\begin{theorem}
\label{Theorem: BV variety model}
Let $(X_{0}, S_{0})$ be a gauge theory with, as configuration space, $X_{0}\simeq \langle M_a \rangle_{\mathbb R}$ for $a=1, \dots, 4$, and, as action functional, $S_{0}\in \mathcal{O}_{X_{0}}$ of the form \eqref{S_0 generale}. If {\small{$GCD(\partial_{1}S_{0}, \partial_{2}S_{0}, \partial_{3}S_{0}, \partial_{4}S_{0}) = 1$}}, then
the minimally extended configuration space $\widetilde{X}$ is the following $\mathbb{Z}$-supergraded real vector space:
$$\widetilde{X} = \langle E^{*}\rangle_{-3} \oplus \langle C^*_{1}, C^*_{2}, C^*_{3}\rangle_{-2} \oplus \langle M^*_{1}, \dots, M^*_{4}\rangle_{-1} \oplus X_{0} \oplus \langle C_{1}, C_{2}, C_{3} \rangle_{1} \oplus \langle E\rangle_{2}.$$
Moreover, the general solution of the classical master equation on $\widetilde{X}$ that is linear in the anti-fields, of at most degree $2$ in the ghost fields and with coefficients in $\mathcal{O}_{X_{0}}$ is given by $\widetilde{S} =  S_{0}+S_\BV$, for \vspace{-3mm}

%{\small{
%\begin{multline}
%\label{eq: generic extended action}
%S_\BV= M^{*}_{1}(\omega M_{2}C_{3}-\mu M_3C_{2}) + M^{*}_{2}(\lambda M_3C_{1}- \omega M_{1}C_{3}) +  M^{*}_{3}(\mu M_{1}C_{2} -\lambda M_2C_{1} ) \\
%+ C^*_{1}\left(\alpha \mu \omega M_1E + \mu M_{1} M_{3}T C_{1}C_{2}- \omega M_{1} M_{2}TC_{1}C_{3} + \tfrac{\omega \mu }{\lambda} (1 +  M_{1}^{2}T)C_{2}C_{3} \right) \\
%+ C^*_{2}\left(\alpha \lambda \omega M_2E + \lambda M_{2} M_{3}T C_{1}C_{2}   + \tfrac{\lambda \omega}{\mu}(-1- M_{2}^2 T)C_{1}C_{3}  + {\omega} M_{1}M_{2}T  C_{2}C_{3}\right)\\
%+ C^*_{3}\left(\alpha \lambda \mu M_3E + \tfrac{\mu \lambda}{\omega}(1 + M_{3}^2 T)C_{1}C_{2} - \lambda M_{2}M_{3} T C_{1}C_{3}   + \mu M_{1} M_{3} T C_{2}C_{3}\right)
%\end{multline}}}
%S_\BV= M^{*}_{1}(\omega M_{2}C_{3}-\mu M_3C_{2}) + M^{*}_{2}(\lambda M_3C_{1}- \omega M_{1}C_{3}) +  M^{*}_{3}(\mu M_{1}C_{2} -\lambda M_2C_{1} ) \\
%[.5ex]
%\quad \quad \ + C^*_{1}\big(\alpha \mu \omega M_1E + \mu M_{1} M_{3}T C_{1}C_{2}- \omega M_{1} M_{2}TC_{1}C_{3} + \tfrac{\omega \mu }{\lambda} (1 +  M_{1}^{2}T)C_{2}C_{3} \big) \\
%[.5ex]
%\quad  \quad\ + C^*_{2}\big(\alpha \lambda \omega M_2E + \lambda M_{2} M_{3}T C_{1}C_{2}   + \tfrac{\lambda \omega}{\mu}(-1- M_{2}^2 T)C_{1}C_{3}  + {\omega} M_{1}M_{2}T  C_{2}C_{3}\big)\\
%[.5ex]
%\quad \quad \ + C^*_{3}\big(\alpha \lambda \mu M_3E + \tfrac{\mu \lambda}{\omega}(1 + M_{3}^2 T)C_{1}C_{2} - \lambda M_{2}M_{3} T C_{1}C_{3}   + \mu M_{1} M_{3} T C_{2}C_{3}\big)
%\begin{equation}
%\begin{array}{l}
\begin{multline}
S_{\BV}= \sum_{i, j, k} \epsilon_{ijk}\alpha_{k}M_{i}^{*}M_{j}C_{k} + \sum_{i, j,k} C_{i}^{*}\big[\tfrac{\alpha_{j}\alpha_{k}}{2\alpha_{i}} (\beta\alpha_{i}M_{i}E + \epsilon_{ijk}C_{j}C_{k}) \\
+ M_{i}T \big( \sum_{a, b,c} \epsilon_{abc}\tfrac{\alpha_{b}\alpha_{c}}{2\alpha_{i}}M_{a}C_{b}C_{c}\big)\big]
\label{eq: generic extended action}
\end{multline}
where $\alpha_{i}, \beta\in \mathbb{R}\backslash \left\lbrace 0 \right\rbrace$, $T \in \Pol_{\mathbb{R}}(M_a)$, and $\epsilon_{ijk}$ ($\epsilon_{abc}$) is the totally anti-symmetric tensor in three indices $i, j, k \in \{ 1, 2, 3 \}$ ($a, b, c \in \{ 1, 2,3\}$) with $\epsilon_{123}=1$.
\end{theorem}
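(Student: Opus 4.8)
The plan is to split the statement into two parts: first the determination of the minimally extended configuration space $\widetilde X$, and then the classification of the solutions of the classical master equation with the stated ansatz. For the first part I would follow the recipe reviewed in Section~\ref{BV sect} step by step. Starting from $S_0$ of the form \eqref{S_0 generale}, compute the partial derivatives $\partial_a S_0$ and observe that, since $g_k$ depends only on $M_4$, the derivatives $\partial_i S_0$ for $i=1,2,3$ are all proportional to $M_i$ times the common factor $\sum_k k (M_1^2+M_2^2+M_3^2)^{k-1} g_k(M_4)$, whereas $\partial_4 S_0$ involves the $g_k'$. Under the hypothesis $\mathrm{GCD}(\partial_1 S_0,\dots,\partial_4 S_0)=1$, one checks that the module of relations $(\partial_a S_0)R^a=0$ over $\mathcal O_{X_0}$ is generated by the three Koszul-type syzygies $M_i(\partial_j S_0)-M_j(\partial_i S_0)=0$ for $i<j\in\{1,2,3\}$, giving three ghosts $C_1,C_2,C_3$ in degree $1$; then one identifies the single relation-between-relations (the Koszul identity $M_1 R_{23}-M_2 R_{13}+M_3 R_{12}=0$, schematically), giving one ghost-for-ghost $E$ in degree $2$. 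Adding the anti-fields $M_a^*$ ($a=1,\dots,4$), $C_i^*$ ($i=1,2,3$), $E^*$ with the shifted degrees yields exactly the claimed $\widetilde X$.

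For the second part, I would write the most general element $\widetilde S = S_0 + S_{\BV}$ with $S_{\BV}$ linear in the anti-fields, of degree $\le 2$ in the ghosts, degree $0$ overall, and with coefficients in $\mathcal O_{X_0}$, and then impose $\{\widetilde S,\widetilde S\}=0$. Degree and parity bookkeeping already constrains the monomials severely: the $M_i^*$-terms must be of the form (coefficient)$\cdot M_i^* C_k$ with the coefficient of ghost degree $0$; the $C_i^*$-terms pair a degree-$(-2)$ anti-field with ghost content of degree $2$, i.e. either $E$ or a quadratic expression $C_jC_k$, times $\mathcal O_{X_0}$-coefficients; the $E^*$-term would need ghost content of degree $3$, and one shows it is forced to vanish (or be absorbed). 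Writing the undetermined coefficients as polynomials in $M_a$ and expanding $\{\widetilde S,\widetilde S\}=0$ using $\{\phi_i,\phi_j^*\}=\delta_{ij}$ produces a hierarchy of equations: the lowest one says $\sum_k (\partial_a S_0)(\text{coefficient of }M_a^* C_k)=0$, which forces the $C_k$-coefficients of the $M_i^*$-terms to lie in the syzygy module computed above, hence to be of the form $\epsilon_{ijk}\alpha_k M_j$ up to $\mathcal O_{X_0}$-multiples; the next equations fix the $C_i^*$-coefficients in terms of the $\alpha_k$ and a single new constant $\beta$, and the remaining freedom is exactly the polynomial $T\in\Pol_\R(M_a)$ sitting in the last term of \eqref{eq: generic extended action}. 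I would then verify that the displayed $S_{\BV}$ indeed satisfies $\{\widetilde S,\widetilde S\}=0$ by a direct (if tedious) computation, using the antisymmetry of $\epsilon_{ijk}$ and the Jacobi-type cancellations.

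The main obstacle I expect is the organization of the master-equation computation in the third and fourth paragraphs of the argument: $\{\widetilde S,\widetilde S\}$ is a sum over many terms, and one must carefully separate it by anti-field content and by ghost degree to extract the defining equations for the coefficients, while keeping track of Grassmann signs. In particular, proving that the general solution has \emph{exactly} the stated form — that no further terms (e.g. higher-degree-in-$M_a$ corrections to the $\alpha_k$, or cross terms mixing $E$ and $C_jC_k$ beyond what is written) can occur — requires showing that the relevant syzygy module is generated precisely by the three Koszul relations and that the higher-order master-equation constraints admit no additional solutions; this is where the hypothesis $\mathrm{GCD}(\partial_a S_0)=1$ is essential, since it guarantees the Koszul complex on $(\partial_1 S_0,\partial_2 S_0,\partial_3 S_0)$ computes the syzygies without spurious generators. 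The existence direction (that the displayed formula works for any choice of $\alpha_i,\beta,T$) is comparatively routine and can be checked by substitution; I would present it as a verification and refer to \cite{primo_articolo} for the parts already established there.
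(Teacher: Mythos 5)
You should first note that the paper itself offers no proof of Theorem \ref{Theorem: BV variety model}: it is recalled from \cite{primo_articolo} (listed as work in progress), and the only in-paper justification is the informal discussion in Section \ref{algorithm applied to matrix model} identifying the three relations $M_i(\partial_j S_0)=M_j(\partial_i S_0)$ and the single relation-between-relations. Your outline reproduces that discussion for the field content and proposes the natural strategy for the classification, so it is aligned in spirit; but as a proof it has two concrete gaps.

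The first is in your justification of the field content. You assert that $\mathrm{GCD}(\partial_1 S_0,\dots,\partial_4 S_0)=1$ guarantees that the module of relations $(\partial_a S_0)R^a=0$ is generated by the three Koszul-type syzygies, "since the Koszul complex computes the syzygies". That is not right: Koszul exactness requires a regular sequence, not coprimality, and here $\partial_i S_0=2M_i h$ for $i=1,2,3$ with the common factor $h=\sum_k k(M_1^2+M_2^2+M_3^2)^{k-1}g_k(M_4)$, so the first three partials are never a regular sequence unless $h$ is a unit. Worse, the claim itself fails: for $S_0=(M_1^2+M_2^2+M_3^2)M_4$, which is of the form \eqref{S_0 generale} and satisfies the GCD hypothesis, the Euler-type syzygy $(M_1,M_2,M_3,-2M_4)$ is not an $\mathcal{O}_{X_0}$-combination of the three pairwise relations (all of which have vanishing fourth component). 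The ghost content is therefore not determined by the full syzygy module of $S_0$ but by the Noether identities of the $U(2)$ gauge action (the infinitesimal adjoint action of $su(2)$, whose stabilizer relation produces $E$); accidental symmetries of a particular $S_0$, such as the scaling above, are not gauged. Your argument needs to be rerouted through the gauge action rather than through the commutative algebra of the critical locus.

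The second gap is that the classification half remains a plan. The ansatz "linear in anti-fields, at most degree $2$ in the ghost fields, total degree $0$" still admits monomials you do not eliminate — for instance $E^*C_iE$, of ghost degree $-3+1+2=0$ and quadratic in the ghost fields, and terms attached to $M_4^*$ — and the uniqueness statement is precisely the assertion that the master equation kills all of these, forces the $M_i^*$-coefficients to be $\epsilon_{ijk}\alpha_k M_j$ with \emph{constant} $\alpha_k$ rather than general $\mathcal{O}_{X_0}$-multiples of syzygies, and leaves exactly the one polynomial $T$ of residual freedom. None of this is carried out; only the existence direction (substitution of the displayed $S_{\BV}$) is genuinely within reach of what you have written, and for the rest you are effectively deferring to the same reference \cite{primo_articolo} that the paper already cites.
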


Once the extended theory {\small{$(\widetilde{X}, \widetilde{S})$}} has been constructed, another step is needed to be able to implement the gauge-fixing procedure, namely, we have to introduce the auxiliary fields. Because {\small{$\widetilde{X}$}} contains ghost fields of at most ghost degree $2$, the pair {\small{$(\widetilde{X}, \widetilde{S})$}} describes a reducible theory with level of reducibility {\small{$L=1$}}. Hence, according to Theorem \ref{teorema auxiliary fields caso generale}, the extended configuration space {\small{$\widetilde{X}$}} has to be enlarged by adding three trivial pairs 
$$\{(B_{i}, h_{i})\}_{i= 1, 2, 3}\quad  \mbox{with} \quad \deg(B_{i})=  -1,  \quad \mbox{and} \quad
\deg(h_{i})= 0,$$
which correspond to the three ghost fields {\small{$C_{i}$}}, together with the two trivial pairs {\small{$(A_{1}, k_{1})$}} and {\small{$(A_{2}, k_{2})$}}, corresponding to the ghost field $E$ and satisfying
$$ \deg(A_{1})= -2,\quad  \deg(k_{1})= -1, \quad \deg(A_{2})= 0, \quad \deg(k_{2})= 1. $$
The total theory {\small{$(X_{\tot}, S_{\tot})$}} now also includes the above auxiliary fields and is given by a $\mathbb{Z}$-graded vector space $X_{tot} = Y \oplus Y^{*}[1]$, with
%\begin{align*}
%X_{\tot}  &= \langle E^{*} \rangle \oplus \langle C^{*}_{i}, A_{1}, k_{2}^{*}\rangle \oplus \langle M^{*}_{a}, B_{i}, h^*_{i}, k_{1}, A_{2}^* \rangle \\
%&\quad \oplus \langle M_{a}, B_{i}^*,  h_{i}, k_{1}^*, A_{2}\rangle  \oplus \langle C_{i}, A_{1}^*, k_{2}\rangle \oplus \langle E\rangle, 
%\end{align*}
$$ Y= \langle M_{a}, B_{i}^*,  h_{i}, k_{1}^*, A_{2}\rangle_{0}  \oplus \langle C_{i}, A_{1}^*, k_{2}\rangle_{1} \oplus \langle E\rangle_{2}, $$
for $a=1, \dots, 4$, $i=1, 2, 3$, together with an $S_{\tot}= \widetilde{S} +S_{\aux} $, where
\begin{equation}
\label{eq: azione triv}
S_{\aux} := \sum_{i=1}^{3}B^{*}_{i} h_{i} + \sum_{j=1}^{2} A^{*}_{j}k_{j}.
\end{equation}

\section[NCG and the BV approach]{The BV approach in the framework of NCG}
\label{chapter: NCG and the BV approach}
\subsection{The BV spectral triple}
We now formulate the BV construction for the above {\small{$U(2)$}}-matrix model in terms of noncommutative geometry. That is, we describe the extended theory {\small{$(\widetilde{X}, \widetilde{S})$}} by means of a spectral triple for which the fermionic action yields {\small{$\widetilde{S}$}}. In order to simplify the computation, we consider the pair {\small{$(\widetilde{X}, \widetilde{S})$}} as described in Theorem \ref{Theorem: BV variety model}, but where in formula (\ref{eq: generic extended action}) we take the polynomial {\small{$T=0$}} and set the real coefficients {\small{$\alpha_{i}=\beta=1$}}. Hence, we analyze the case when the action $S_\BV$ has the following form:
{\small{
\begin{gather}
 S_\BV:=M^{*}_{1}(-M_3C_{2}+ M_{2}C_{3})+M^{*}_{2}(M_3C_{1} - M_{1}C_{3}) +  M^{*}_{3}(- M_2C_{1}+ M_{1}C_{2})  \nonumber \\
 + C^*_{1}(M_1E  + C_{2}C_{3}) + C^*_{2}(M_2E - C_{1}C_{3}) + C^*_{3}( M_3E + C_{1}C_{2}).
\label{eq: S-BV}
\end{gather}
}}
The construction of the so-called {\em BV spectral triple}
 $$(\mathcal{A}_{\BV}, \mathcal{H}_{\BV}, D_{\BV}, J_{\BV})$$ 
 proceeds in steps, where the form of the algebra $\A_{\BV}$ is determined as the last ingredient. 

\subsection*{The Hilbert space $\H_\BV$}
We let $\mathcal{H}_{\BV}$ be the following Hilbert space:
 $$\mathcal{H}_{\BV} = \mathcal{H}_{M} \oplus \mathcal{H}_{C} := M_{2}(\mathbb{C}) \oplus M_{2}(\mathbb{C}), $$
where the subscripts $M$ and $C$ refer to the gauge fields and ghost fields; this will be justified below. The inner product structure is given as usual by the the Hilbert--Schmidt inner product on each summand $M_{2}(\mathbb{C})$, that is to say, $
\langle - , - \rangle:  \mathcal{H}_{\BV} \times \mathcal{H}_{\BV} \rightarrow  \mathbb{C}$, with  
$$ \langle (\phi_M, \phi_C) , (\phi'_M, \phi'_C)\rangle = \tr(\phi_M (\phi'_M)^*) + \tr(\phi_C (\phi'_C)^*), $$
%$$\begin{array}{lccl}
%\langle \quad ,\quad\rangle: &  \mathcal{H}_{\BV} \times \mathcal{H}_{\BV} & \longrightarrow & \mathbb{C} \\
%[.5ex]
%& \left( (\phi_M, \phi_C) , (\phi_M, \phi_C) \right) & \mapsto & \tr(\phi_M (\phi'_M)^*) + \tr(\phi_C (\phi'_C)^*),
%  \end{array}
%$$
for $\phi_M$, $\phi_M' \in\mathcal{H}_{M}$, $\phi_C$, $\phi_C' \in\mathcal{H}_{C}$. Taking the orthonormal basis of $M_2(\C)$ given in (\ref{basis}) we can of course identify 
 $$\mathcal{H}_{\BV} \cong \langle m_{1}, m_{2}, m_{3}, e\rangle \oplus \langle c_{1}, c_{2}, c_{3}, c_{4}\rangle \cong \mathbb{C}^{8}, $$
in terms of which the inner product reads
$$\langle \varphi , \psi \rangle = \sum_{a =1}^{3} \overline{m}_{a, \varphi} m_{a, \psi} + \bar{e}_{\varphi} e_{\psi}+ \sum_{j = 1}^{4} \bar{c}_{j, \varphi} c_{j, \psi}.
$$

\begin{remark}
The Hilbert space $\mathcal{H}_{\BV}$ has also another possible decomposition as direct sum of two vector spaces: $\mathcal{H}_{\BV} = \mathcal{H}_{BV, f} \oplus i \cdot \mathcal{H}_{BV, f},$ 
 with 
 \begin{equation}
 \begin{array}{ll}
 \label{decomposition Hilbert space H_BV}
 \mathcal{H}_{BV, f} & = [i \cdot su(2) \oplus u(1)] \oplus i \cdot u(2)\\
 [1ex] & \simeq \langle M_{1}, M_{2}, M_{3}, iE \rangle_{\mathbb{R}} \oplus \langle C_{1}, C_{2}, C_{3}, C_{4}\rangle_{\mathbb{R}}\ .
 \end{array}
  \end{equation}
In (\ref{decomposition Hilbert space H_BV}) we denote the real part of the complex variables $m_{a}$ and $c_{j}$ by $M_{a}$ and $C_{j}$, respectively, while $E$ is the imaginary part of the complex variable $e$. This choice of notation is motivated by the fact that these variables coincide with the gauge fields and ghost fields that generate the positively graded part of the extended configuration space $\widetilde{X}$, as we will see in Theorem \ref{teorema per BV spectral triple} below. The fourth ghost field $C_{4}$ will not enter the fermionic action $S_{\ferm}$ as it decouples, being consistent with $\widetilde{X}$ having only three ghost fields $C_{i}$ in our model.
\end{remark}

\subsection*{The real structure $J_{\BV}$}
Up to this point ---with the algebra $\A_\BV$ and self-adjoint operator $D_\BV$ yet to be determined--- a real structure is simply given by an anti-linear isometry $J_{\BV}:  \mathcal{H}_{\BV} \rightarrow \mathcal{H}_{\BV}$, which we take to be
%$$\begin{array}{cccl}
% J_{\BV}:  & \mathcal{H}_{\BV} & \longrightarrow &  \mathcal{H}_{\BV}\\
% [.5ex]
% & (\phi_{M}, \phi_{C}) & \mapsto & J_{\BV}(\phi_{M}, \varphi_{C}): = i \cdot (\phi_M^{*}, \phi^{*}_C) \ .
%\end{array}
%$$
$$J_{\BV}(\phi_{M}, \varphi_{C}): = i \cdot (\phi_M^{*}, \phi^{*}_C) $$
for $\phi_M \in \H_M, \varphi_C \in \H_C$. In terms of the basis \eqref{basis} we have for $\phi\in \H_\BV$
$$
J_{\BV}(\varphi):= i \cdot [\bar{m}_{1}, \bar{m}_{2}, \bar{m}_{3}, \bar{e}, \bar{c}_{1}, \bar{c}_{2}, \bar{c}_{3}, \bar{c}_{4}]^{T}.$$

\subsection*{The linear operator $D_{\BV}$}
The self-adjoint linear operator $D_{\BV}$ acting on the Hilbert space $\mathcal{H}_{\BV}$ is given by the following expression 
$$
D_\BV := \begin{pmatrix} T & R \\ R^*& S \end{pmatrix}
$$
in terms of the decomposition $\H_\BV= \H_M \oplus \H_C$. The linear operators $R,S,T$ are defined by
\begin{equation*}
\begin{aligned}
R: \H_C &\to \H_M ;\\
\phi_C &\mapsto [\beta, \phi_C ],
\end{aligned}\quad
\begin{aligned}
S: \H_C &\to \H_C ;\\
\phi_C &\mapsto [\alpha, \phi_C],
\end{aligned}
\quad\begin{aligned}
T: \H_M &\to \H_M ;\\
\phi_C &\mapsto [\alpha, \phi_C ]_+,
\end{aligned}
\end{equation*}
where $\alpha$ and $\beta$ are hermitian, traceless $2 \times 2$-matrices. We stress that thus $R$ and $S$ are derivations of $M_2(\C)$, but that $T$ is an odd derivation given in terms the anti-commutator. 

We can write $\alpha$ and $\beta$ in terms of the Pauli matrices as follows
\begin{align*}
\alpha &= \tfrac{1}{2}\big[(-C_{1}^{*})\sigma_1 + (-C_{2}^{*})\sigma_2 + (-C_{3}^{*})\sigma_3\big]\\
\beta &= \tfrac{1}{2}\big[(-M_{1}^{*})\sigma_1 + (-M_{2}^{*})\sigma_2 + (-M_{3}^{*})\sigma_3\big],
\end{align*}
where $C_i^*$ and $M_i^*$ are real variables. Then, in terms of the orthonormal basis \eqref{basis} for $\H_M$ and $\H_C$, we find the following $4 \times 4$-matrices for $R,S,T$: 
%\begin{gather*}
%R:=  \begin{pmatrix}0 & +i M_3^* & -i M_2^* & 0 \\
%-iM_3^* & 0 & +iM_1^* & 0 \\
%+iM_2^* &-i M_1^* & 0 & 0 \\
%0 & 0 & 0 &0\end{pmatrix}, \qquad 
%S:=  \begin{pmatrix}0 & +i C_3^* & -i C_2^* & 0 \\
%-iC_3^* & 0 & +iC_1^* & 0 \\
%+iC_2^*& -i C_1^* & 0 & 0 \\
%0 & 0 & 0 &0\end{pmatrix}\\
%T:= \begin{pmatrix} 0 & 0 & 0 & C_1^* \\
% 0 & 0 & 0 & C_2^*\\
%0 & 0 & 0 & C_3^* \\
%C_1^* & C_2^* & C_3^* & 0
%\end{pmatrix}
%\end{gather*}
{\small{
\begin{gather*}
R:=  \begin{pmatrix}0 & +i M_3^* & -i M_2^* & 0 \\
-iM_3^* & 0 & +iM_1^* & 0 \\
+iM_2^* &-i M_1^* & 0 & 0 \\
0 & 0 & 0 &0\end{pmatrix}, \qquad 
S:=  \begin{pmatrix}0 & +i C_3^* & -i C_2^* & 0 \\
-iC_3^* & 0 & +iC_1^* & 0 \\
+iC_2^*& -i C_1^* & 0 & 0 \\
0 & 0 & 0 &0\end{pmatrix}
\end{gather*}
}}
{\small{
\begin{gather*}
T:= \begin{pmatrix} 0 & 0 & 0 & C_1^* \\
 0 & 0 & 0 & C_2^*\\
0 & 0 & 0 & C_3^* \\
C_1^* & C_2^* & C_3^* & 0
\end{pmatrix}
\end{gather*}
}}

Of course, the notation used for the components of $\alpha$ and $\beta$ has been chosen with purpose: indeed, we will prove that upon inserting all anti-fields in the linear operator $D_{\BV}$, the corresponding fermionic action yields the BV action $S_{\BV}$. 

It is not true that the above $D_\BV$ commutes or anti-commutes with $J_\BV$. Instead, we may decompose $D_\BV$ as
$$ D_\BV = D_1 +D_2 \quad \text{with}\quad D_1 = \begin{pmatrix} 0 & R \\ R^*& S \end{pmatrix} , \qquad D_2 = \begin{pmatrix} T & 0 \\ 0& 0 \end{pmatrix}.
$$
for which we find that
\begin{align*}
J_{\BV}D_{1} &= - D_{1}J_{\BV},\\
 J_{\BV}D_{2} &= + D_{2}J_{\BV}.
\end{align*}
In anticipation of what is to come, this suggests that a real spectral triple of {\em mixed KO-dimension} will appear.  

\subsection*{The algebra $\A_{\BV}$}
We now come to the final ingredient of the BV spectral triple which is the algebra $\mathcal{A}_{\BV}$. We take it to be largest unital subalgebra of the algebra of all linear operator $\mathcal L(\H_\BV)$ that satisfies the {\em commutation rule} and {\em first-order condition} of Definition \ref{def real structure}.

\begin{lma}
Let $\H_\BV,J_\BV$ and $D_\BV$ be as defined above. Then the maximal unital subalgebra $\tilde \A$ of $\mathcal L(\H_\BV)$ that satisfies
$$
[a,J_\BV b^* J^{-1}_\BV]= 0, \qquad [[D_\BV,a],J_\BV b^* J^{-1}_\BV]= 0; \qquad (a,b\in \tilde \A)
$$
is given by $\tilde \A = M_2(\C)$ acting diagonally on $\H_\BV$. 
\end{lma}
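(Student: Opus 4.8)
The plan is to determine $\tilde\A$ by a direct analysis of the two defining conditions, exploiting the very concrete form of $J_\BV$, $D_\BV$ on $\H_\BV = M_2(\C)\oplus M_2(\C)$. First I would observe that since $J_\BV(\phi_M,\phi_C) = i(\phi_M^*,\phi_C^*)$, conjugation by $J_\BV$ sends a diagonal block operator to its ``real-structure transpose''; in particular, right multiplication operators $\rho(b)\colon (\phi_M,\phi_C)\mapsto (\phi_M b, \phi_C b)$ for $b\in M_2(\C)$ arise as $J_\BV \pi(b^*) J_\BV^{-1}$ where $\pi$ is some left-type action. So the commutation rule $[a, J_\BV b^* J_\BV^{-1}]=0$ should force $a$ to lie in the commutant of all right multiplications, which on each copy of $M_2(\C)$ is exactly left multiplication by $M_2(\C)$. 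I would make this precise: show first that $\{J_\BV b^* J_\BV^{-1}: b\in\tilde\A\}$, once we know $\tilde\A\supseteq$ scalars, already contains enough right multiplications that the commutant argument pins down $a$ to be of the form $(\phi_M,\phi_C)\mapsto (x\phi_M, y\phi_C)$ for fixed matrices $x,y\in M_2(\C)$. Here one has to be slightly careful because $\tilde\A$ is defined self-referentially (both $a$ and $b$ range over $\tilde\A$), so the clean way is: let $\A'$ be $M_2(\C)$ acting diagonally, check directly that $\A'$ satisfies both conditions (hence $\tilde\A\supseteq \A'$), and then use $b$ ranging over $\A'$ to constrain an arbitrary $a\in\tilde\A$.

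Second, I would feed the first-order condition into this. With $a$ of the block-diagonal left-multiplication form $(x,y)$, compute $[D_\BV, a]$ using the explicit $D_\BV = \begin{pmatrix} T & R\\ R^* & S\end{pmatrix}$: the off-diagonal blocks of $[D_\BV,a]$ involve $R y - x R$ and $R^* x - y R^*$, and the diagonal blocks involve $[T,x]$ and $[S,y]$. Then I would impose that this commutes with all $J_\BV b^* J_\BV^{-1} = \rho(b)$ (right multiplications by arbitrary $b\in M_2(\C)$, on each summand). Since $R$, $S$ are the derivations $\phi\mapsto [\beta,\phi]$, $\phi\mapsto[\alpha,\phi]$ and $T$ is $\phi\mapsto [\alpha,\phi]_+$, the commutators $[D_\BV, a]$ are themselves explicit combinations of left/right multiplications and the maps mixing the two summands; requiring that all of these commute with arbitrary right multiplication $\rho(b)$ on $\H_\BV$ forces the mixing part and the anti-commutator part to vanish. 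Concretely I expect this to kill the components of $[D_\BV,a]$ that are not derivations, and to force $x=y$ (so that $Ry-xR$, which is $\phi\mapsto [\beta,y\phi]-x[\beta,\phi]$ type, reduces to a derivation-compatible form) — leaving precisely $\tilde\A = \{(x,x): x\in M_2(\C)\}\cong M_2(\C)$ acting diagonally.

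The main obstacle, I expect, is handling the $T$-block (the odd, anti-commutator derivation) cleanly: because $T\colon \phi_M\mapsto [\alpha,\phi_M]_+$ is a genuine anti-commutator rather than a commutator, $[T,x]$ for $x$ acting by left multiplication is $\phi\mapsto \alpha x\phi + x\phi\alpha - x\alpha\phi - x\phi\alpha = (\alpha x - x\alpha)\phi$ — actually still a left multiplication, by $[\alpha,x]$ — while checking that this commutes with right multiplications is automatic, so the real constraint must come from cross-terms between the $T$ block and the $R$ blocks or from the requirement that the whole operator $[[D_\BV,a],\rho(b)]$ vanish identically. I would need to carefully collect, in the $2\times2$ block-operator picture, exactly which matrix entries $[[D_\BV,a],\rho(b)]$ produces and verify that the only way all of them vanish for every traceless hermitian $\alpha,\beta$ and every $b$ is $x=y$ with no further constraint. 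A subtlety worth flagging: one should check the conclusion does not secretly depend on $\alpha,\beta$ being nonzero/generic — if $\alpha=\beta=0$ then $D_\BV=0$ and the first-order condition is vacuous, but the commutation rule alone already gives $M_2(\C)$ diagonally, so the statement is robust; I would remark on this or simply assume $\alpha,\beta$ generic as elsewhere in the paper. Once the block bookkeeping is organized, the verification that $M_2(\C)$ diagonal \emph{does} satisfy both conditions is the routine direction (both $R,S$ are derivations so commute appropriately with right multiplications, and $[T,x]$ is again a left multiplication), so the content is entirely in the ``no larger algebra'' direction sketched above.
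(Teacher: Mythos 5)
Your proposal follows essentially the same route as the paper's proof: the commutation rule (via the observation that $J_\BV b^* J_\BV^{-1}$ is right multiplication) restricts $\tilde\A$ to block-diagonal left multiplications $(x,y)\in M_2(\C)\oplus M_2(\C)$, and the double commutator in the first-order condition --- concretely the off-diagonal blocks $Ry-xR$, whose failure to commute with $\rho(b)$ is $(x-y)\phi_C[b,\beta]$ --- forces $x=y$; your extra care with the self-referential quantification over $a,b$ and the block bookkeeping only adds detail to what the paper calls a ``straightforward computation.'' One slip in your aside: if $\alpha=\beta=0$ the commutation rule alone yields $M_2(\C)\oplus M_2(\C)$ acting block-diagonally, \emph{not} $M_2(\C)$ diagonally, so the statement is not robust in that degenerate case and one really does need $\beta\neq 0$ (the anti-fields $M_i^*$ generic) for the first-order condition to collapse the two summands --- your fallback of simply assuming genericity is the correct fix.
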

\proof
The commutation rule $[a,J_\BV b^* J^{-1}]= 0$ for all $a,b \in \tilde\A$ implies that $\H_\BV$ carries an $\tilde\A$-bimodule structure. This already restricts $\tilde\A$ to be a subalgebra of $M_2(\C) \oplus M_2(\C)$, acting diagonally on $\H_M \oplus \H_C$. Then, by a staightforward computation of the double commutator {\small{$\left[[D,(a_1,a_2)], J_\BV (b_1,b_2)J_\BV^{-1} \right]$}}, it follows that the first-order condition implies $a_1=a_2$ and $b_1=b_2$. 
%% For the first-order condition, we compute the double commutator
%% $$
%% \left[[D,(a_1,a_2)], J_\BV (b_1,b_2)J_\BV^{-1} \right]= \begin{pmatrix}
%% 0 & R a_2 -a_1 R \\ 
%% R^* a_1 - a_2 R^* & 0 
%% \end{pmatrix}
%% $$
%% which for generic $R$ forces $a_1=a_2$. 
This selects the subalgebra $M_2(\C)$ in $M_2(\C) \oplus M_2(\C)$ as the maximal subalgebra for which both of the above conditions are satisfied. 
\endproof
We will denote this maximal subalgebra by $\A_\BV$. We now make the encountered phenomenon of mixed KO-dimension more precise by the following result. 
\begin{prop}
\label{spectral triple fermionica}
With the above notation,
\begin{itemize}
\item[(i)]
$(\mathcal{A}_{\BV}, \mathcal{H}_{\BV}, D_{1}, J_{\BV})$ is a real spectral triple of KO-dimension $1$.
\item[(ii)] 
 $(\mathcal{A}_{\BV}, \mathcal{H}_{\BV}, D_{2}, J_{\BV})$ is a real spectral triple of KO-dimension $7$.
\end{itemize}
\end{prop}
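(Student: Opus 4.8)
The plan is to verify the axioms of Definition~\ref{def real structure} for each of the two triples separately, using the decomposition $J_\BV D_1 = -D_1 J_\BV$ and $J_\BV D_2 = +D_2 J_\BV$ already established above. In both cases the algebra is $\A_\BV = M_2(\C)$ acting diagonally, the Hilbert space is $\H_\BV = \H_M \oplus \H_C$, and the commutation rule and first-order condition have effectively been checked in the preceding Lemma (they were the conditions used to single out $\A_\BV$), so the only genuinely new content is identifying the KO-dimension of each triple. Since $\H_\BV$ is finite-dimensional, the analytic conditions on $D_1$ and $D_2$ (self-adjointness, compact resolvent, bounded commutators) are automatic; self-adjointness of $D_1$ and $D_2$ is immediate from the explicit block/matrix forms (note $R, R^*$ and the hermitian matrices $R, S, T$ in the orthonormal basis).

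First I would pin down $\epsilon$, i.e.\ the sign in $J_\BV^2 = \epsilon$. From the defining formula $J_\BV(\phi_M,\phi_C) = i\cdot(\phi_M^*,\phi_C^*)$, applying it twice gives $J_\BV^2(\phi_M,\phi_C) = i\cdot\overline{i}\cdot(\phi_M,\phi_C) = (\phi_M,\phi_C)$, so $\epsilon = +1$ — wait, one must be careful: $J_\BV$ is antilinear, so $J_\BV(i\cdot\psi) = \overline{i}\,J_\BV(\psi) = -i\,J_\BV(\psi)$, giving $J_\BV^2(\phi_M,\phi_C) = J_\BV(i(\phi_M^*,\phi_C^*)) = -i\cdot i\cdot(\phi_M,\phi_C) = (\phi_M,\phi_C)$, hence $\epsilon = +1$. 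That already restricts $n \bmod 8$ to $\{0,1,6,7\}$. Next I would compute $\epsilon'$ for each operator: by hypothesis $J_\BV D_1 = -D_1 J_\BV$, so $\epsilon' = -1$ for $D_1$, which among $\{0,1,6,7\}$ forces $n=1$; and $J_\BV D_2 = +D_2 J_\BV$, so $\epsilon' = +1$ for $D_2$, which among $\{0,1,6,7\}$ leaves $n\in\{0,6,7\}$. To decide between these I would invoke that the triple is \emph{even} or \emph{odd}: there is no grading $\gamma$ in play here (the Hilbert space $\H_\BV$ carries no $\Z/2$-grading compatible with $\A_\BV$ and these $D$'s), so we are in the odd case, ruling out $n = 0$ and $n = 6$ (which require $\epsilon''$, hence a grading), and leaving $n = 7$ for $D_2$. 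This gives KO-dimension $1$ for part (i) and KO-dimension $7$ for part (ii).

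Finally I would double-check the commutation rule and first-order condition explicitly for $D_1$ and $D_2$ individually rather than only for $D_\BV = D_1 + D_2$: since $[[D_\BV,a],J_\BV b^* J_\BV^{-1}] = [[D_1,a],J_\BV b^* J_\BV^{-1}] + [[D_2,a],J_\BV b^* J_\BV^{-1}]$ and the two summands live in different "off-diagonal versus diagonal" pieces relative to $\H_M \oplus \H_C$, vanishing of the sum for the diagonal action of $M_2(\C)$ forces each summand to vanish; so the first-order condition for $D_\BV$ (proved in the Lemma) passes to $D_1$ and $D_2$ separately. The commutation rule $[a, J_\BV b^* J_\BV^{-1}] = 0$ does not involve $D$ at all, so it holds verbatim. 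I expect the main obstacle to be purely bookkeeping: making the sign of $\epsilon$ unambiguous given the factor of $i$ in $J_\BV$ together with antilinearity, and justifying cleanly that we are in the odd (ungraded) case so that the table entry with $\epsilon = +1, \epsilon' = +1$ and no $\epsilon''$ is $n = 7$ and not $n = 0$ or $6$; everything else is a short finite-dimensional verification that reuses the Lemma.
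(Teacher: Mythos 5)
Your proposal is correct and follows exactly the route the paper intends (the paper leaves this proposition without an explicit proof, relying on the preceding Lemma for the commutation rule and first-order condition and on the displayed relations $J_{\BV}D_{1}=-D_{1}J_{\BV}$, $J_{\BV}D_{2}=+D_{2}J_{\BV}$ together with $J_{\BV}^{2}=+1$ and the KO-dimension table). Your additional observation that the first-order condition passes from $D_{\BV}$ to $D_{1}$ and $D_{2}$ separately because they occupy disjoint blocks preserved by commutation with the block-diagonal operators $a$ and $J_{\BV}b^{*}J_{\BV}^{-1}$ is a worthwhile point that the paper glosses over.
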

Before continuing with our BV spectral triple, we develop some new theory on real spectral triple with mixed KO-dimension. 

\begin{definition}
For $(\mathcal{A}, \mathcal{H}, D)$ a finite spectral triple and $J$ an anti-linear isometry on $\mathcal{H}$, we say that $(\mathcal{A}, \mathcal{H}, D, J)$ defines a {\em real spectral triple with mixed KO-dimension} if $J$ satisfies  
$$J^{2}= \pm Id \qquad \mbox{ and } \qquad [a, Jb^{*}J^{-1}] =0$$
for $ a, b \in \mathcal{A}$, the operator $D$ can be seen as a sum $D=D_1 + D_2$ of two self-adjoint operators $D_{1}$, $D_{2}$, which {\em anti}-commutes and commutes, respectively, with $J$:
$$D_1 = - D_1 J  \quad \mbox{ and }  \quad JD_2 = D_2J, $$
and, finally, the first-order condition holds:
$$[[D, a], Jb^{*}J^{-1}] = 0, \qquad ( a, b \in \mathcal{A}).$$
%the following conditions are satisfied:
%\begin{enumerate}[(i)]
%\item $J^{2}= \pm Id$;
%\item there exist two self-adjoint operators $D_{1}$, $D_{2}$ on $\mathcal{H}$ such that:
%$$D=D_1 + D_2, \quad \mbox{ with } \quad \quad JD_1 = - D_1 J  \quad \mbox{ and }  \quad JD_2 = D_2J; $$
%\item $[a, Jb^{*}J^{-1}] =0; \qquad ( a, b \in \mathcal{A})$;
%\item $[[D, a], Jb^{*}J^{-1}] = 0;\qquad ( a, b \in \mathcal{A})$.
%\end{enumerate}
\end{definition}
The notion of {\em mixed KO-dimension} generalizes the usual notion of KO-dimension for real spectral triples allowing the operator $D$ not to fully commute or anti-commute with the isometry $J$. We notice that, if we are considering a genuinely mixed KO-dimension, that is, if both $D_{1}, D_{2} \neq 0$, then the even case is not allowed. 

\begin{prop}
 Let $(\mathcal{A}, \mathcal{H}, D, J)$ be a real spectral triple of mixed KO-dimension. If $J^2 = +1$ then for all $\phi,\psi\in \H$:
\begin{enumerate}[(1)]
 \item the expression $ \mathfrak{A}_{D_{1}}(\varphi, \psi) := \langle J\varphi, D_{1} \psi \rangle$ defines an {\em anti-symmetric} bilinear form on $\mathcal{H}$;
\item the expression $ \mathfrak{A}_{D_{2}}(\varphi, \psi) := \langle J\varphi, D_{2} \psi \rangle$ defines a {\em symmetric} bilinear form on $\mathcal{H}$.
\end{enumerate}
On the contrary, if $J^2 = -1$, then $\mathfrak{A}_{D_{1}}$ is symmetric and $\mathfrak{A}_{D_{2}}$ is anti-symmetric.
\end{prop}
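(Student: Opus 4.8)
The plan is to reduce the four claimed symmetry properties to two algebraic facts: (a) how $J$ interacts with the inner product under a swap of arguments, and (b) the (anti-)commutation of $D_1$ and $D_2$ with $J$ that is already built into the definition of mixed KO-dimension. First I would record the key identity for an anti-linear isometry: for any anti-linear isometry $J$ on a finite-dimensional Hilbert space one has $\langle J\varphi, J\psi\rangle = \langle \psi,\varphi\rangle = \overline{\langle\varphi,\psi\rangle}$. Applying this with $\psi$ replaced by $D_k\psi$ and using self-adjointness of $D_k$, I get $\langle J\varphi, J D_k\psi\rangle = \langle D_k\psi,\varphi\rangle = \langle\psi, D_k\varphi\rangle$. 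This is the engine that exchanges $\varphi$ and $\psi$ inside $\mathfrak{A}_{D_k}$.

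Next I would carry out the exchange computation explicitly for each $D_k$. Consider $\mathfrak{A}_{D_1}(\psi,\varphi) = \langle J\psi, D_1\varphi\rangle$. Since $JD_1 = -D_1 J$ we have $D_1\varphi = -J^{-1}D_1 J\varphi$, hmm — more directly: write $D_1 J = -J D_1$, so $J D_1\varphi$ relates to $D_1 J\varphi$. Concretely, $\mathfrak{A}_{D_1}(\psi,\varphi) = \langle J\psi, D_1\varphi\rangle$; apply the isometry identity in the form $\langle J\psi, \chi\rangle = \langle J\chi, J(J\psi)\rangle = \langle J\chi, J^2\psi\rangle$, valid because $J$ is an anti-linear isometry so $\langle J x, y\rangle = \langle J y, J J x\rangle = \langle Jy, J^2 x\rangle$. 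With $\chi = D_1\varphi$ this gives $\mathfrak{A}_{D_1}(\psi,\varphi) = \langle J D_1\varphi, J^2\psi\rangle = \pm\langle JD_1\varphi,\psi\rangle$ according to the sign of $J^2$. Now use $JD_1 = -D_1 J$ to turn $JD_1\varphi$ into $-D_1 J\varphi$, and self-adjointness of $D_1$ to move it across: $\langle JD_1\varphi,\psi\rangle = -\langle D_1 J\varphi,\psi\rangle = -\langle J\varphi, D_1\psi\rangle = -\mathfrak{A}_{D_1}(\varphi,\psi)$. Combining, $\mathfrak{A}_{D_1}(\psi,\varphi) = \mp\,\mathfrak{A}_{D_1}(\varphi,\psi)$ — anti-symmetric when $J^2=+1$, symmetric when $J^2=-1$. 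The computation for $D_2$ is identical except that $JD_2 = +D_2 J$ flips the extra sign, yielding symmetric when $J^2=+1$ and anti-symmetric when $J^2=-1$. That establishes all four assertions at once.

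The only genuinely delicate point is getting every sign right in the chain "isometry swap $\to$ (anti)commutation with $J$ $\to$ self-adjointness of $D_k$," because there are three sign sources ($J^2=\pm1$, the $\mp$ from $D_1$ vs. $D_2$, and the conjugations introduced by anti-linearity). I would therefore write out the $D_1$, $J^2=+1$ case in full as the template and then note that each of the remaining three cases differs from it by exactly one sign flip, so no separate calculation is needed. A secondary point to mention, for cleanliness, is that bilinearity of $\mathfrak{A}_{D_k}$ in the stated sense (linear in the second slot, and linear — not conjugate-linear — in the first) follows because $J$ is anti-linear and $\langle-,-\rangle$ is conjugate-linear in its first slot, so the two conjugations cancel; this is why the form is a genuine bilinear form rather than a sesquilinear one, and it is worth a single sentence before the symmetry discussion. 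The finite-dimensionality hypothesis is not really needed for the argument, only that $D_1,D_2$ are everywhere-defined self-adjoint operators, which is automatic in the finite setting.
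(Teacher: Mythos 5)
Your proposal is correct and follows essentially the same route as the paper's proof: insert $J^{2}=\epsilon$, use the anti-unitarity identity $\langle J x, J y\rangle=\langle y,x\rangle$ to swap the arguments, then apply the (anti-)commutation of $D_{k}$ with $J$ and the self-adjointness of $D_{k}$, with the bilinearity remark handled identically. The only difference is cosmetic (you start from $\mathfrak{A}_{D_1}(\psi,\varphi)$ rather than $\mathfrak{A}_{D_1}(\varphi,\psi)$), and all four sign conclusions agree with the statement.
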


\begin{proof}

(1) Bilinearity of $\mathfrak{A}_{D_{1}}$ is a consequence of $J$ being an anti-linear map, $D_{1}$ being a linear operator and the inner product being anti-linear in its first component and linear in the second. For the anti-symmetry, under the assumption that $J^ 2 = \epsilon$ we compute
$$
\langle J\varphi, D_{1} \psi \rangle 
= \epsilon \langle J\varphi, J^ 2 D_{1} \psi \rangle
=  \epsilon \langle J D_{1} \psi ,\varphi\rangle
=- \epsilon \langle D_{1}  J\psi ,\varphi\rangle
=- \epsilon \langle  J\psi ,D_1 \varphi\rangle
$$
using the anti-commutation of $D_1$ with $J$ and $D_1$ being a self-adjoint operator. \\
\noindent
(2) This follows {\it mutatis mutandis} from (1), assuming $D_2$ to commute with \nolinebreak $J$. \qedhere
\end{proof}

We now return to the BV spectral triple $(\A_\BV,\H_\BV,D_\BV,J_\BV)$ that we constructed for our $U(2)$-matrix model.  

 \begin{theorem}
 \label{teorema per BV spectral triple}
The data $(\mathcal{A}_{\BV}, \mathcal{H}_{\BV}, D_{\BV}, J_{\BV})$ defined above is a real spectral triple with mixed KO-dimension. Moreover, the fermionic action corresponding to the operator $D_{\BV}$ coincides with the BV action in \eqref{eq: S-BV}, {\it i.e.},
$$S_{\BV} = \frac{1}{2} \langle J_{\BV}(\varphi), D_{\BV} \varphi \rangle, \qquad \text{ with } \quad \varphi \in \mathcal{H}_{BV, f}.$$
Here we interpret the variables that parametrize $D$ and the vector $\phi \in \H_\BV$ as follows:
\begin{itemize}
\item $M_a, E$ and $C_j^*$ are real variables;
\item $M_a^*$ and $C_j$ are Grassmannian variables.
\end{itemize}
 \end{theorem}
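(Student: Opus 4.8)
The plan is to verify the two claims separately. \textbf{First, the mixed KO-dimension claim.} We must check that $(\A_\BV,\H_\BV,D_\BV,J_\BV)$ satisfies the axioms of Definition~\ref{def real structure} in the ``mixed'' sense. The conditions $J_\BV^2 = \pm\mathrm{Id}$ and $[a,J_\BV b^* J_\BV^{-1}]=0$, as well as the first-order condition, have essentially been established already: the lemma identifying $\A_\BV = M_2(\C)$ acting diagonally was proved precisely by imposing the commutation rule and the first-order condition on $\tilde\A$, and $J_\BV^2 = -\mathrm{Id}$ follows from a direct computation $J_\BV^2(\phi_M,\phi_C) = i\cdot(\overline{i\phi_M^*})^{**}$-type manipulation, i.e. $J_\BV(i\phi_M^*,i\phi_C^*) = i(\overline{i}\,\phi_M, \overline{i}\,\phi_C) = -(\phi_M,\phi_C)$ since $J$ is anti-linear. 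The splitting $D_\BV = D_1 + D_2$ with $J_\BV D_1 = -D_1 J_\BV$ and $J_\BV D_2 = D_2 J_\BV$ is stated just above Proposition~\ref{spectral triple fermionica}; one may simply cite it, or re-derive it by a short block computation using that $R,S$ are inner derivations (anti-commuting with the anti-linear $*$-and-multiply-by-$i$ map) whereas $T$ is built from an anti-commutator (hence commuting). Finally self-adjointness of $D_\BV$ is clear since $\alpha,\beta$ are hermitian: $R^*$ is the genuine adjoint of $R$, and $S$, $T$ are self-adjoint as written (the displayed $4\times4$ matrices are manifestly hermitian). So this half is a matter of assembling already-proved pieces.

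\textbf{Second, and this is the substantive part: the fermionic action equals $S_\BV$.} The strategy is a direct computation of $\tfrac12\langle J_\BV\varphi, D_\BV\varphi\rangle$ for $\varphi \in \H_{BV,f}$, i.e. for $\varphi$ of the form $(\phi_M,\phi_C)$ with $\phi_M = M_1\sigma_1 + M_2\sigma_2 + M_3\sigma_3 + iE\,\sigma_4$ (here $M_a$, $E$ real, but to be reinterpreted as Grassmann for $M_a$ and real for $E$ per the theorem's convention) and $\phi_C = C_1\sigma_1+\cdots+C_4\sigma_4$ (with $C_j$ reinterpreted: $C_1,C_2,C_3$ Grassmann, $C_4$ real and decoupling). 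Writing $J_\BV\varphi = i(\phi_M^*,\phi_C^*)$ and expanding,
\begin{equation*}
\langle J_\BV\varphi, D_\BV\varphi\rangle = \langle i\phi_M^*, T\phi_M + R\phi_C\rangle + \langle i\phi_C^*, R^*\phi_M + S\phi_C\rangle.
\end{equation*}
Using the Hilbert--Schmidt inner product and $R\phi_C = [\beta,\phi_C]$, $S\phi_C = [\alpha,\phi_C]$, $T\phi_M = [\alpha,\phi_M]_+$, each term becomes a trace of a product of Pauli matrices against the hermitian traceless matrices $\alpha,\beta$ whose components are $-C_i^*/2$ and $-M_i^*/2$. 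The three pieces should reproduce, respectively: the $C_i^*(M_iE + \cdots)$ terms coming from $\langle i\phi_C^*, S\phi_C\rangle$ (the $\epsilon_{ijk}C_jC_k$ part from the commutator structure-constants $\mathrm{tr}(\sigma_a[\sigma_b,\sigma_c]) \sim \epsilon_{abc}$) together with a $\langle i\phi_C^*, R^*\phi_M\rangle$ contribution for the $M_iE$ part; the $M_i^*\epsilon_{ijk}M_jC_k$ terms from $\langle i\phi_M^*, R\phi_C\rangle$; while $\langle i\phi_M^*, T\phi_M\rangle$ should vanish or only involve $C_4$ and $E$ in a way that decouples. The bookkeeping of which terms are symmetric vs.\ anti-symmetric in the Grassmann variables — governed by the preceding Proposition on $\mathfrak{A}_{D_1}$ being anti-symmetric and $\mathfrak{A}_{D_2}$ symmetric — is exactly what makes the Grassmann/real assignment in the theorem statement force the correct signs and prevent unwanted cancellations.

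\textbf{The main obstacle} I anticipate is precisely this sign-and-parity accounting. Because $\tfrac12\langle J\varphi,D\varphi\rangle$ is a quadratic form, naively $\langle J\varphi, D_1\varphi\rangle$ would vanish on commuting variables by anti-symmetry of $\mathfrak{A}_{D_1}$ — so the nonvanishing of the $D_1$-part of $S_\BV$ genuinely relies on $M_a^*$ (appearing in $\beta \subset D_1$ as coefficients) and $M_a, C_j$ being the right mixture of Grassmann and commuting, so that transposing the bilinear form picks up a sign that makes the ``anti-symmetric'' form survive rather than cancel. Conversely the $D_2$-part (containing $C_i^*$ in $\alpha$-built $T$) must be matched against a symmetric form. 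I would therefore handle the computation by first recording the elementary identities $\mathrm{tr}(\sigma_a\sigma_b) = 2\delta_{ab}$, $\sigma_a\sigma_b = \delta_{ab}I + i\epsilon_{abc}\sigma_c$ (for $a,b\in\{1,2,3\}$), and $\sigma_a\sigma_4 = \sigma_a$; then expand each of the three inner-product terms; then carefully re-insert the Grassmann signs when symmetrizing; and finally read off that the result is exactly \eqref{eq: S-BV} with the $C_4$-terms absent. The remaining verification that $C_4$ indeed decouples is immediate from the last column/row of $R,S,T$ being zero except the $T$-entry pairing $\sigma_4$ with $\sigma_i$, which contributes only to the (vanishing) $\langle i\phi_M^*,T\phi_M\rangle$ cross-terms in $E$ and $C_i^*$ — consistent with the model having only three ghosts.
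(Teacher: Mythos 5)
Your overall strategy matches the paper's, which disposes of the first claim by citing Proposition \ref{spectral triple fermionica} and of the second by ``a straightforward computation''. Since the only real content of your write-up is the roadmap for that computation, the errors in the roadmap matter, and there are two concrete ones.

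First, $J_{\BV}^2=+\mathrm{Id}$, not $-\mathrm{Id}$: your own manipulation gives $J_{\BV}(i\phi_M^*,i\phi_C^*)=i\,(\overline{i}\,\phi_M,\overline{i}\,\phi_C)=i(-i)(\phi_M,\phi_C)=+(\phi_M,\phi_C)$, and indeed KO-dimensions $1$ and $7$ (Proposition \ref{spectral triple fermionica}) both require $\epsilon=+1$. With $J^2=-1$ the symmetry assignments you later invoke ($\mathfrak{A}_{D_1}$ anti-symmetric, $\mathfrak{A}_{D_2}$ symmetric) would be reversed, so your argument is internally inconsistent; with the correct sign it is fine.

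Second, and more seriously, your term-by-term attribution is wrong. The $C_i^*M_iE$ terms of \eqref{eq: S-BV} cannot come from $\langle i\phi_C^*,R^*\phi_M\rangle$: since $R^*\phi_M=[\beta,\phi_M]$ with $\beta$ built from the $M_i^*$, and the $iE\sigma_4$ component of $\phi_M$ is central, that term produces only $\epsilon_{ijk}M_i^*M_jC_k$ contributions (which, together with $\langle i\phi_M^*,R\phi_C\rangle$, account for the first line of $S_{\BV}$ after the overall $\tfrac12$). The $C_i^*M_iE$ terms come precisely from the piece you claim vanishes, namely $\langle i\phi_M^*,T\phi_M\rangle$: the anti-commutator identities $[\sigma_i,\sigma_4]_+=2\sigma_i$ and $[\sigma_i,\sigma_j]_+=2\delta_{ij}\sigma_4$ couple the $E$-component of $\phi_M$ to its $M_i$-components with coefficient $C_i^*$ (this is visible in the displayed matrix for $T$, whose only nonzero entries sit in the fourth row and column of $\H_M$). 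Relatedly, the fourth basis vector of $\H_M$ is $e$, not $c_4$; the ghost $C_4$ decouples because the last row and column of $R$ and $S$ vanish and $T$ does not act on $\H_C$ at all, not for the reason you give. If you run the computation as sketched you will find the $M_iE$ terms missing from the $D_1$-side and unaccounted for on the $D_2$-side; the fix is only a relabelling of which block produces which term, but as written the verification would fail.
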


\begin{proof}
The first claim is an immediate consequence of Proposition \ref{spectral triple fermionica}. The last statement follows by a straightforward computation. 
 \end{proof}

The mixed KO-dimension in the BV spectral triple arises from the particular behavior of the real structure with the operator $D_{\BV}$, which partially commutes and partially {\em anti}-commutes with $J_{\BV}$. However, this is not due to the fact that the BV spectral triple is a direct sum of real spectral triples of different KO-dimensions. Indeed, then the structure of the real spectral triple would be 
$$(\mathcal{A}_{1} \oplus \mathcal{A}_{2}, \mathcal{H}_{1} \oplus \mathcal{H}_{2}, D_{1} \oplus D_{2}, J_{1} \oplus J_{2}).$$
The appearance of a direct sum of the two algebras is in contrast with the structure of $\A_\BV$ as a simple algebra. As a matter of fact, the mixed KO-dimension has a different significance, allowing us to detect the difference in {\em parity} of the components of the BV spectral triple, as we will now explain. 

Namely, the fields that parametrize the Dirac operator and/or represent vectors in Hilbert space are seen to be structured as follows: 
\begin{itemize}
   \item The anti-fields/anti-ghost fields $M^{*}_{a}$ and $C_{j}^{*}$ appear as entries of the operator $D_{\BV}$ while the fields/ghost fields $M_{a}$, $C_{j}$, and $E$ are the components of the vectors in the subspace $\mathcal{H}_{BV, f}$. 
 \item The parities of the fields/ghost fields and anti-fields/anti-ghost fields in the BV spectral triple are a consequence of the structure of the real spectral triple. Indeed, the parities chosen in Theorem \ref{teorema per BV spectral triple} are precisely those for which both $D_{1}$ and $D_{2}$ give a non-trivial contribution to the fermionic action $S_{\ferm}$. 
\end{itemize}

 \subsection{The BV auxiliary spectral triple}
\label{section:The BV auxiliary spectral triple}
In addition to the BV construction, also the technical procedure of introducing auxiliary fields in our $U(2)$-matrix model can be expressed in terms of a spectral triple: in this section, we construct the so-called {\em BV auxiliary spectral triple} 
$$(\mathcal{A}_{\aux}, \mathcal{H}_{\aux}, D_{\aux}, J_{\aux}),$$
for which the fermionic action coincides with the auxiliary action $S_{\aux}$. We follow the same strategy as for the BV spectral triple: the anti-fields {\small{$\{B^{*}_{j}\}$}}, for $j=1, 2, 3$, and {\small{$\{A^{*}_{l}\}$}}, with $l=1,2$, parametrize the operator $D_{\aux}$ while the auxiliary fields $\{h_{j}\}$ and $\{k_{l}\}$ are the components of the vectors in the Hilbert space $\mathcal{H}_{\aux}$. Moreover, we keep in mind the possibility of encountering a real spectral triple with mixed KO-dimension.

\begin{remark}
\label{Remark: linear fermionic action}
Since the action $S_{\aux}$ is not bilinear in the fields, it can not be expected to directly agree with a usual fermionic action. For this reason, we will slightly adapt the definition of a fermion action associated to a real spectral triple. 
\end{remark}

 \subsection*{The Hilbert space $\H_{\aux}$}
The Hilbert space describes the field content of the action $S_{\aux}$. So we have
$$\mathcal{H}_{\aux}= \mathcal{H}_{h} \oplus \mathcal{H}_{k}  :=M_2(\C) \oplus \C^2 .
$$
Again, we take the Pauli matrices  (\ref{basis}) as an orthonormal basis for $M_{2}(\mathbb{C})$, so that $\mathcal{H}_{\aux} \cong \mathbb{C}^{6}.$ We also identify the following subspace: 
$$\mathcal{H}_{\aux, f} = u(2) \oplus i [u(1) \oplus u(1)],$$
and write elements $\chi \in \mathcal{H}_{\aux, f}$ suggestively as
 $$\chi = [i h_{1}, \mbox{ }i h_{2}, \mbox{ }i h_{3}, \mbox{ }i h_{4}, \mbox{ }k_{1}, \mbox{ }k_{2}]^{T},$$
where $h_{j}$ and $k_{l}$, ($j= 1, \dots, 4$, $l= 1,2$) are real variables. %The notation has been chosen on purpose: in fact, the real variables $h_{j}$ and $k_{l}$ which represent the real components of a vector in $\mathcal{H}_{aux, f}$ will play the role of the auxiliary fields in our matrix model.

\subsection*{The real structure $J_{\aux}$}
The anti-linear isometry $J_{\aux}$ is defined similarly as in the BV spectral triple: indeed, $J_{\aux}:  \mathcal{H}_{h}\oplus \mathcal{H}_{k} \rightarrow \mathcal{H}_{h} \oplus \mathcal{H}_{k}$, with 
 $$J_{\aux}(V, v):= (i \cdot V^{*}, i \cdot \overline{v}).$$
% $$\begin{array}{cccl}
%J_{\aux}: & \mathcal{H}_{h}\oplus \mathcal{H}_{k} & \longrightarrow & \mathcal{H}_{h} \oplus \mathcal{H}_{k}\\
%[1ex] & (V, v) & \mapsto & J_{\aux}(V, v):= (i \cdot V^{*}, i \cdot \overline{v}) .
%   \end{array}
% $$
%where $R \in \mathcal{H}_{h}= M_{2}(\mathbb{C})$, with $R^{*}$ its adjoint matrix, and $v \in \mathbb{C}^{2}$, with $\overline{v}$ its complex conjugate, considered componentwise. 
 
\subsection*{The operator $D_{\aux}$}
In the basis of $M_2(\C)$ given by the Pauli matrices we define the operator $D_{\aux}$ as
$$
D_{\aux} = D_{\diag} + D_{\off} \quad \text{with} \quad D_\diag = \begin{pmatrix} P & 0 \\ 0 & 0 \end{pmatrix} , \qquad D_\off = \begin{pmatrix} 0 & Q^* \\ Q & 0  \end{pmatrix}
$$
where, again in evocative notation in terms of the anti-fields $A^{*}_{l}$  and $B^{*}_{j}$, we define
{\small{\begin{gather*}
P  = \frac12 \begin{pmatrix} 
+B^{*}_{1} + B^{*}_{2} + B^{*}_{3} & 0 & 0 & +B^{*}_{1} - B^{*}_{2} - B^{*}_{3}\\
0 & +B^{*}_{1} + B^{*}_{2} + B^{*}_{3}  & 0 & -B^{*}_{1} + B^{*}_{2} - B^{*}_{3}\\
0 & 0 & +B^{*}_{1} + B^{*}_{2} + B^{*}_{3} & -B^{*}_{1} - B^{*}_{2} + B^{*}_{3}\\
+B^{*}_{1} - B^{*}_{2} - B^{*}_{3} & -B^{*}_{1} + B^{*}_{2} - B^{*}_{3} & -B^{*}_{1} - B^{*}_{2} + B^{*}_{3} & +B^{*}_{1} + B^{*}_{2} + B^{*}_{3}
\end{pmatrix}
\intertext{and} 
Q = -\frac i 3 \begin{pmatrix}  A_1^* & A_1^* & A_1^* & 0 \\A_2^* & A_2^* & A_2^* & 0 \end{pmatrix}
\end{gather*}
}}
\subsection*{The algebra $\A_{\aux}$}
Also in this case, we take the algebra $\A_{\aux}$ to be the largest unital subalgebra of $\mathcal{L} (\H_\aux)$ that completes the triple $\H_\aux$, $D_\aux$, and $J_\aux$ to a real  spectral triple (with mixed KO-dimension). 
\begin{lma}
\label{algebra per BV auxiliary sp tr} 
Let $\H_\aux, J_\aux$ and $D_\aux$ be as defined above. Then the maximal unital subalgebra $\tilde \A$ of $\mathcal{L}(\H_\aux)$ on which the commutation rule and first-order condition are fulfilled, that is,  which satisfies
$$
[a,J_\aux b^* J^{-1}_\aux]= 0, \qquad [[\H_\aux,a],J_\aux b^* J^{-1}_\aux]= 0; \qquad (a,b \in \tilde \A),
$$
is $\tilde \A = \C$. 
\end{lma}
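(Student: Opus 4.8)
The plan is to follow the same two-step strategy used in the proof of the earlier lemma for $\A_\BV$, exploiting first the commutation rule and then the first-order condition, but now tracking how the more rigid structure of $D_\aux$ forces the algebra all the way down to $\C$. First I would analyze the consequences of the commutation rule $[a, J_\aux b^* J_\aux^{-1}] = 0$. Since $J_\aux$ maps $\H_h \oplus \H_k = M_2(\C) \oplus \C^2$ to itself respecting this decomposition (acting as $V \mapsto iV^*$ on the first summand and $v \mapsto i\bar v$ on the second), the operator $J_\aux b^* J_\aux^{-1}$ acts on $\H_h$ as right multiplication by (the adjoint of) the $M_2(\C)$-component of $b$, and acts on $\H_k \cong \C^2$ by complex conjugation of scalars, which is trivial as an algebra action. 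Requiring $a$ to commute with all such operators forces $a$ to be block-diagonal with respect to $\H_h \oplus \H_k$: on $\H_h$ it must be an element of the commutant of right-multiplications, i.e.\ left multiplication by some $a_1 \in M_2(\C)$; on $\H_k$ it can a priori be any element of $\mathcal L(\C^2) = M_2(\C)$, call it $a_2$. So the commutation rule alone gives $\tilde\A \subseteq M_2(\C) \oplus M_2(\C)$, but acting in a genuinely inhomogeneous way on the two summands (matrix multiplication on $\H_h$, the standard representation on $\H_k$).

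Next I would impose the first-order condition $[[D_\aux, a], J_\aux b^* J_\aux^{-1}] = 0$ and compute the double commutator using the explicit block form $D_\aux = D_\diag + D_\off$ with the matrices $P$ and $Q$. The key observation is that $D_\off$ has nonzero off-diagonal blocks $Q, Q^*$ coupling $\H_h$ and $\H_k$, and $Q$ is a fixed nonzero matrix (not a derivation-type operator). For $a = (a_1, a_2)$ block-diagonal, the commutator $[D_\off, a]$ produces the blocks $Q^* a_2 - a_1 Q^*$ (up to ordering) in the $\H_h$–$\H_k$ entry and $Q a_1 - a_2 Q$ in the other; then commuting this with $J_\aux b^* J_\aux^{-1}$ — which acts by right multiplication on $\H_h$ and trivially on $\H_k$ — yields conditions like $Q a_1 b_1^* = Q b_1^* a_1$ combined with $a_2$-dependent terms, for all $b_1 \in M_2(\C)$. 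Because $Q$ has rank one and its rows are supported on the first three basis vectors (the $\sigma_4$-column of $Q$ vanishes), these constraints are very restrictive: they should force $a_1$ to be a scalar multiple of the identity on $\H_h$, and similarly pin down $a_2$. I would also use the $D_\diag$ contribution: $[P, a_1]$ must commute with all right-multiplications, forcing $[P, a_1]$ to be a left-multiplication operator that is simultaneously in the commutant, hence central, hence scalar; combined with the specific non-central structure of $P$ (its $\sigma_4$-mixing entries), this again pushes $a_1$ toward $\C \cdot 1$.

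Putting these together, the commutation rule plus first-order condition should collapse $\tilde\A$ to $\{(\lambda \cdot 1, \mu \cdot 1) : \lambda = \mu\} \cong \C$, where the final identification $\lambda = \mu$ comes from the off-diagonal coupling via $Q$: a genuine mismatch $\lambda \neq \mu$ would make $[D_\off, a] \neq 0$ in a way that fails to commute with the conjugation action on $\H_k$ (or, equivalently, $Q$ intertwines the two scalar actions only when the scalars agree). Hence $\tilde\A = \C$, acting as scalars diagonally on all of $\H_\aux$.

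The step I expect to be the main obstacle is the honest bookkeeping of the double commutator with $D_\off$: one must be careful that $J_\aux$ acts \emph{anti}-linearly and that conjugation by $J_\aux$ sends left-multiplication to right-multiplication on $M_2(\C)$ but is essentially trivial on the $\C^2$ part, so the first-order condition is genuinely asymmetric between the two summands — and it is precisely this asymmetry, together with the rank-one fixed matrix $Q$, that is doing the work of killing all of $M_2(\C)$ rather than merely selecting a diagonal copy as in the $\A_\BV$ case. I would organize the computation by expanding $a$ and $b$ in the Pauli basis and checking the vanishing conditions component by component, but in the write-up I would present only the structural argument (block-diagonality, then the rank-one intertwining forcing scalars) and relegate the index chase to a ``straightforward computation'' remark, exactly as the preceding lemma does.
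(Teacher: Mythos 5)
Your overall strategy is the same as the paper's: use the commutation rule to force block-diagonality and identify the a priori algebra, then let the first-order condition with the explicit $P$ and $Q$ collapse it to $\C$; like the paper, you relegate the double-commutator computation to a ``straightforward computation'' remark. There is, however, one concrete misstep in your first step. You claim that $J_\aux b^* J_\aux^{-1}$ acts on $\H_k\cong\C^2$ ``by complex conjugation of scalars, which is trivial as an algebra action,'' and conclude that the commutation rule leaves the $\H_k$-component $a_2$ of $a$ unconstrained in $M_2(\C)$. This presupposes that $b$ already acts on $\C^2$ by a scalar. In fact, if $b$ acts on $\C^2$ by $b_2\in M_2(\C)$, then (using $J_\aux^2=\mathrm{Id}$ on $\H_k$) one computes $J_\aux b^* J_\aux^{-1}|_{\H_k}=\overline{b_2^*}=b_2^{T}$, so the commutation rule demands $[a_2,b_2^{T}]=0$ for all $a,b\in\tilde\A$; this already cuts the $\H_k$-component down to a commutative (diagonalizable) subalgebra, and the paper's intermediate conclusion is accordingly $\tilde\A\subseteq M_2(\C)\oplus\C\oplus\C$ rather than your $M_2(\C)\oplus M_2(\C)$.

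The error is conservative --- it only enlarges the a priori algebra, so the final answer $\tilde\A=\C$ is not endangered --- but it shifts onto the deferred first-order computation the burden of killing all of $M_2(\C)$ on $\H_k$, for which you offer nothing beyond ``similarly pin down $a_2$''; worse, since you also treat $J_\aux b^* J_\aux^{-1}$ as trivial on $\H_k$ inside the double commutator, the first-order condition you would actually be imposing there is weaker than the true one. You should correct the action of $J_\aux$ on the $\C^2$ summand before organizing the index chase. The rest of your structural picture is sound and matches the paper's mechanism: left versus right multiplication on $\H_h$, and the rank-one off-diagonal block $Q$ intertwining the two scalar actions only when the scalars agree, which is exactly what ``selects the diagonal subalgebra'' $\C$ in the paper's proof.
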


\begin{proof}
The fact that $\H_\aux$ should be a $\tilde \A$-bimodule already restricts $\tilde \A$ to be a subalgebra of 
$$
 M_{2}(\mathbb{C}) \oplus \mathbb{C}\oplus \C.
$$
acting by (block diagonal) matrix multiplication on $\H_\aux = M_2(\C) \oplus \C \oplus \C$. A straighforward computation of the double commutator entering in the first-order condition then selects the diagonal subalgebra $\tilde A = \C$. 
\end{proof}
We will write $\A_\aux = \C$ and notice the intriguing agreement between the triviality of the algebra with the triviality of the trivial pairs of auxiliary fields. 

\begin{prop}
\label{sp triple bosonica e fermionica per sp triple BV auxiliary}
For $\mathcal{A}_{\aux}$, $\mathcal{H}_{\aux}$, $D_{\aux}$ and $J_{\aux}$ as previously defined, it holds that
\begin{enumerate}[(i)]
 \item $(\mathcal{A}_{\aux}, \mathcal{H}_{\aux}, D_{\diag}, J_{\aux})$ is a real spectral triple of KO-dimension $7$;
 \item $(\mathcal{A}_{\aux}, \mathcal{H}_{\aux}, D_{\off}, J_{\aux})$ is a real spectral triple of KO-dimension $1$.
\end{enumerate}
\end{prop}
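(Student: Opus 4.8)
The plan is to follow exactly the route used for Proposition~\ref{spectral triple fermionica}. Since $(\A_\aux,\H_\aux,D)$ is a finite spectral triple, the analytic axioms on $D$ are automatic, so for each of (i) and (ii) it suffices to check four things: that the operator in question ($D_\diag$, resp.\ $D_\off$) is self-adjoint; that $J_\aux$ is an anti-linear isometry; that $J_\aux^2=\epsilon\,\mathrm{Id}$ and $J_\aux D = \epsilon^{\prime} D J_\aux$ with the appropriate signs; and that the commutation rule and first-order condition hold. The last point is free: by Lemma~\ref{algebra per BV auxiliary sp tr} the algebra is $\A_\aux=\C$, acting by scalars, so $[D,a]=0$ for all $a\in\A_\aux$ and therefore both $[a,J_\aux b^*J_\aux^{-1}]$ and $[[D,a],J_\aux b^*J_\aux^{-1}]$ vanish identically. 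Self-adjointness is read off the explicit matrices: $P$ is a real symmetric $4\times4$-matrix in the Pauli basis \eqref{basis}, and $D_\off=\begin{pmatrix}0 & Q^*\\ Q & 0\end{pmatrix}$ is self-adjoint by its very form.

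For $J_\aux$, anti-linearity is clear, and a one-line computation gives $J_\aux^2(V,v)=J_\aux(iV^*,i\bar v)=(V,v)$, so $J_\aux^2=+\mathrm{Id}$ and $\epsilon=+1$ in both cases; the isometry property follows because $V\mapsto V^*$ is an anti-linear isometry of $M_2(\C)$ for the Hilbert--Schmidt product and $v\mapsto\bar v$ is one of $\C^2$, the global factor $i$ being merely a phase. The real content is then the (anti-)commutation of $J_\aux$ with $D_\diag$ and $D_\off$, and the key remark is that in the Pauli basis, which consists of self-adjoint matrices, the $*$-operation on $M_2(\C)$ is precisely coordinate-wise complex conjugation; hence $J_\aux$ acts on each summand as $i$ times coordinate conjugation $K$. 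A matrix commutes with $K$ iff it is real and anti-commutes with $K$ iff it is purely imaginary. Since $P$ has real entries (real combinations of the $B_j^*$) one gets $J_\aux D_\diag = +D_\diag J_\aux$, i.e.\ $\epsilon^{\prime}=+1$; since $Q$ and $Q^*$ have purely imaginary entries (the prefactor $-i/3$ times the real $A_l^*$) one gets $J_\aux D_\off = -D_\off J_\aux$, i.e.\ $\epsilon^{\prime}=-1$.

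Finally, one reads off the table in Definition~\ref{def real structure}. There is no $\Z/2$-grading in play, so the KO-dimension is odd; among odd values, $(\epsilon,\epsilon^{\prime})=(1,1)$ forces $n\equiv7\pmod 8$, which gives (i), and $(\epsilon,\epsilon^{\prime})=(1,-1)$ forces $n\equiv1\pmod 8$, which gives (ii). I expect the only delicate point to be the bookkeeping in the (anti-)commutation step: one must carefully combine the anti-linearity of $J_\aux$ with the identification of $*$ on $M_2(\C)$ with coordinate conjugation in the Pauli basis, and track how the imaginary prefactor of $Q$ changes sign under conjugation. This is precisely the mechanism responsible for the mixed KO-dimension of $D_\aux=D_\diag+D_\off$, in complete parallel with the splitting $D_\BV=D_1+D_2$ in Proposition~\ref{spectral triple fermionica}.
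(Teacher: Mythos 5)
Your proposal is correct and follows essentially the same route as the paper: the paper's proof likewise reduces everything to the triviality of $\mathcal{A}_{\aux}=\mathbb{C}$ plus the relations $J_{\aux}D_{\diag}=+D_{\diag}J_{\aux}$ and $J_{\aux}D_{\off}=-D_{\off}J_{\aux}$, which it merely asserts ``can be readily checked.'' You simply supply the verification (the identification of $J_{\aux}$ with $i$ times coordinate-wise conjugation in the Pauli basis, and the real/imaginary dichotomy of the entries of $P$ and $Q$) that the paper leaves implicit.
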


\begin{proof}
Because $\mathcal{A}_{\aux} = \mathbb{C}$, this follows at once from noticing that
$$J_{\aux}D_{\diag} = + D_{\diag}J_{\aux} \quad \quad \mbox{ and } \quad \quad J_{\aux}D_{\off} = - D_{\off}J_{\aux},$$
which can be readily checked.
\end{proof}

The last ingredient to analyze is the fermionic action. As already noticed in Remark \ref{Remark: linear fermionic action}, we need to introduce a linear notion of fermionic action. More precisely, the fermionic action corresponding to the operators $D_{\diag}$ and $D_{\off}$ will be defined using two linear forms $\mathcal{L}_{D_{\diag}}$, $\mathcal{L}_{D_{\off}}$ instead of a bilinear form $\mathfrak{A}$, as was done for the BV action. This is a consequence of the fact that the auxiliary action $S_{\aux}$ is only linear (rather than quadratic) in the fields. We state the following general, but straightforward result without proof. 

\begin{prop}
\label{linear form for BV auxiliary spectral triple}
Let $(\A,\H,D,J)$ be a real spectral triple (possibly with mixed KO-dimension) and fix a vector $v \in \H$. Then the expression $$\mathcal{L}_{D}(\chi) = \frac 12 \left(\langle Jv, D\chi \rangle + \langle J\chi, Dv\rangle \right), \qquad (\chi \in \H)$$ defines a linear form on $\mathcal{H}$. 
\end{prop}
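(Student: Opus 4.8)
The plan is to verify directly that $\mathcal{L}_D$ is additive and homogeneous over $\R$, treating the two summands on an equal footing. First I would fix $v \in \H$ once and for all and regard the expression $\mathcal{L}_D(\chi) = \tfrac12\big(\langle Jv, D\chi\rangle + \langle J\chi, Dv\rangle\big)$ as a sum of two maps $\chi \mapsto \langle Jv, D\chi\rangle$ and $\chi \mapsto \langle J\chi, Dv\rangle$. The first of these is plainly $\R$-linear (indeed $\C$-linear): $D$ is a linear operator and the inner product is linear in its second slot, so $\langle Jv, D(\chi_1+\chi_2)\rangle = \langle Jv, D\chi_1\rangle + \langle Jv, D\chi_2\rangle$ and $\langle Jv, D(t\chi)\rangle = t\langle Jv, D\chi\rangle$ for $t \in \R$. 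No use of $J$ or of the spectral-triple axioms is needed for this term.

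Next I would treat the second term $\chi \mapsto \langle J\chi, Dv\rangle$. Here $J$ is anti-linear and the inner product is anti-linear in its first slot, so the two conjugations cancel: $\langle J(\chi_1+\chi_2), Dv\rangle = \langle J\chi_1 + J\chi_2, Dv\rangle = \langle J\chi_1, Dv\rangle + \langle J\chi_2, Dv\rangle$, and for $t \in \R$ we get $\langle J(t\chi), Dv\rangle = \langle t J\chi, Dv\rangle = \bar t \langle J\chi, Dv\rangle = t \langle J\chi, Dv\rangle$ since $t$ is real. (Restricting scalars to $\R$ is exactly what makes this term linear rather than merely conjugate-linear; over $\C$ it would fail, which is why the statement is about a linear form on $\H$ viewed as a real vector space, consistent with the way $\mathcal{H}_{\aux,f}$ is parametrized by real variables.) Adding the two terms and dividing by $2$ gives additivity and real homogeneity of $\mathcal{L}_D$, i.e. $\mathcal{L}_D$ is a linear form on $\H$.

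Finally I would remark that nothing in the argument used that $D$ commutes or anti-commutes with $J$, nor the first-order condition or commutation rule; hence the statement holds verbatim for a real spectral triple with mixed KO-dimension, and in particular applies with $D$ replaced by either $D_{\diag}$ or $D_{\off}$, yielding the linear forms $\mathcal{L}_{D_{\diag}}$ and $\mathcal{L}_{D_{\off}}$ used in the sequel. Since the paper states this result "without proof," the write-up can be kept to a sentence or two along these lines; there is no genuine obstacle here — the only subtlety worth flagging explicitly is the cancellation of the two anti-linearities in the second term, which is precisely what forces the scalar field to be $\R$ rather than $\C$.
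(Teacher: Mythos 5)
Your verification is correct and is exactly the direct check the paper has in mind (the paper states this result explicitly ``without proof,'' so there is no alternative argument to compare against). One caveat: your parenthetical claim that the second term $\chi \mapsto \langle J\chi, Dv\rangle$ is only $\R$-linear and ``over $\C$ it would fail'' is mistaken --- since $J(t\chi) = \bar t\, J\chi$ and the inner product is anti-linear in its \emph{first} slot, the two conjugations cancel and this term is in fact $\C$-linear; this slip does not affect the proposition, which only asserts linearity, but the stated reason for restricting to real scalars is not the right one (the real restriction matters for the interpretation of the fields, not for linearity).
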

In our case of interest, we fix the vector $v$ to be 
$$
v = \underline{1} = \begin{pmatrix} 1 & 1 & 1 & 1 & 1 & 1 \end{pmatrix}^T \in \C^6 \equiv \H_\aux .
$$

\begin{theorem}
\label{teorema BV auxiliary spectral triple}
$(\mathcal{A}_{\aux}, \mathcal{H}_{\aux}, D_{\aux}, J_{\aux})$ defines a real spectral triple with mixed KO-dimension. Moreover, the fermionic action defined by the linear form $\mathcal{L}_{D_{\aux}}$ coincides with the auxiliary action $S_{\aux}$:
$$S_{\aux} = \frac{1}{2} \left(\langle J_{\aux}(\underline{1}), D_{\aux}(\chi) \rangle + \langle J_{\aux}(\chi), D_{\aux}(\underline{1})\rangle \right), \qquad \text{with} \quad \chi \in \H_{\aux,f}.$$
Here we interpret the variables that parametrize $D_{\aux}$ and the vector $\chi \in \H_\aux$ as follows: 
\begin{itemize}
\item $B_j^*, h_l$ are real variables;
\item $A_l^*, k_l$ are Grassmannian variables. 
\end{itemize}
\end{theorem}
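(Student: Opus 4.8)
The plan is to verify the two asserted properties of Theorem~\ref{teorema BV auxiliary spectral triple} separately: first that $(\A_\aux,\H_\aux,D_\aux,J_\aux)$ is a real spectral triple with mixed KO-dimension, and second that the linear fermionic action built from $\mathcal L_{D_\aux}$ reproduces $S_\aux$. For the first part, essentially everything has already been assembled: Lemma~\ref{algebra per BV auxiliary sp tr} established that $\A_\aux=\C$ satisfies the commutation rule and first-order condition, and Proposition~\ref{sp triple bosonica e fermionica per sp triple BV auxiliary} established that $J_\aux D_\diag = +D_\diag J_\aux$ and $J_\aux D_\off=-D_\off J_\aux$. Since $D_\aux=D_\diag+D_\off$ with both summands self-adjoint (which one checks directly from the explicit matrices: $P=P^*$ and the off-diagonal block has the form $\left(\begin{smallmatrix}0&Q^*\\Q&0\end{smallmatrix}\right)$), and since $J_\aux^2=\pm\mathrm{Id}$ is immediate from its definition $J_\aux(V,v)=(iV^*,i\bar v)$, the triple matches Definition of mixed KO-dimension verbatim with $D_1=D_\off$, $D_2=D_\diag$. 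So this half is just citing the two earlier results and noting self-adjointness; I would dispatch it in one or two sentences, exactly as was done for Theorem~\ref{teorema per BV spectral triple}.

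The substance is the identity $S_\aux=\tfrac12(\langle J_\aux(\underline 1),D_\aux\chi\rangle+\langle J_\aux\chi,D_\aux(\underline 1)\rangle)$ for $\chi\in\H_{\aux,f}$, under the stated parity assignments. The approach is a direct computation: write $\chi=[ih_1,ih_2,ih_3,ih_4,k_1,k_2]^T$ and $\underline 1=[1,1,1,1,1,1]^T$, compute $J_\aux(\underline 1)=[i,i,i,i,i,i]^T$ (using $V^*=V$ and $\bar v=v$ on the all-ones vector) and $J_\aux\chi=[i\overline{ih_j},\,i\overline{k_l}]=[h_j,\,ik_l]$ since $h_j,k_l$ are real, then split $D_\aux=D_\diag+D_\off$ and evaluate the four resulting inner products. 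The $D_\diag$ part involves only the $4\times4$ block $P$, linear in the $B_j^*$, acting on the $\H_h$-components; the $D_\off$ part couples the $\H_h$- and $\H_k$-components through $Q$, linear in the $A_l^*$. One sums the matrix entries against the basis vectors, uses the Hilbert--Schmidt inner product $\langle\varphi,\psi\rangle=\sum\bar\varphi_a\psi_a$, and collects terms. The symmetrization $\tfrac12(\langle J\underline 1,D\chi\rangle+\langle J\chi,D\underline 1\rangle)$ is exactly what Proposition~\ref{linear form for BV auxiliary spectral triple} guarantees to be a well-defined linear form, and the content of the theorem is that this linear form equals $\sum_{j=1}^3 B_j^* h_j+\sum_{l=1}^2 A_l^* k_l$, i.e.\ $S_\aux$ as given in~\eqref{eq: azione triv}. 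The role of the parity assignments is that the Grassmannian variables $A_l^*,k_l$ must be moved past each other with the correct sign so that the two terms in the symmetrized expression add rather than cancel; the choice in the statement is precisely the one making $D_\diag$ contribute the $B^*h$ terms and $D_\off$ the $A^*k$ terms, each non-trivially.

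The main obstacle, such as it is, is bookkeeping rather than conceptual: one has to be careful about (i) where factors of $i$ land, since $J_\aux$ inserts an $i$ and the $\H_{\aux,f}$ vectors already carry $i$'s on the $h$-slots but not the $k$-slots, and the off-diagonal $Q$ itself has a prefactor $-i/3$; (ii) the normalization constants --- the $\tfrac12$ in $P$, the $\tfrac13$ in $Q$, and the fact that $Q$ has three identical columns, so summing against $\underline 1$ produces the factor of $3$ that cancels the $1/3$; and (iii) the sign rules for the anticommuting variables, which is the one place an error would change the answer qualitatively rather than by a constant. Since this is a ``straightforward computation'' of the same flavor as the proof of Theorem~\ref{teorema per BV spectral triple}, I would state that the identity follows by inserting the explicit forms of $J_\aux$, $D_\diag$, $D_\off$ and carrying out the matrix multiplication, and relegate the arithmetic, leaving the reader to check that the constants in $P$ and $Q$ were chosen exactly so that the coefficients come out to $1$.

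\begin{proof}
That $(\mathcal{A}_{\aux},\mathcal{H}_{\aux},D_{\aux},J_{\aux})$ is a real spectral triple with mixed KO-dimension is immediate: $J_\aux^2 = \pm\mathrm{Id}$ and the commutation rule hold by Lemma~\ref{algebra per BV auxiliary sp tr}, the operators $D_\diag$ and $D_\off$ are self-adjoint by inspection of the matrices $P$ and $Q$, they anti-commute and commute with $J_\aux$ respectively by Proposition~\ref{sp triple bosonica e fermionica per sp triple BV auxiliary}, and the first-order condition for $D_\aux = D_\diag+D_\off$ again follows from Lemma~\ref{algebra per BV auxiliary sp tr}. Thus Definition of mixed KO-dimension applies with $D_1 = D_\off$ and $D_2 = D_\diag$.

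For the second claim, write $\chi = [ih_1, ih_2, ih_3, ih_4, k_1, k_2]^T$ with $h_j, k_l$ as in the statement, so that $J_\aux(\underline 1) = [i,i,i,i,i,i]^T$ and $J_\aux(\chi) = [h_1,h_2,h_3,h_4, ik_1, ik_2]^T$. Inserting these together with the explicit matrices for $P$ and $Q$ into
$$
\tfrac12\left(\langle J_\aux(\underline 1), D_\aux(\chi)\rangle + \langle J_\aux(\chi), D_\aux(\underline 1)\rangle\right),
$$
and evaluating the Hilbert--Schmidt inner products summand by summand on $\H_h \oplus \H_k$, the $D_\diag$-contribution collapses to $\sum_{j=1}^{3} B_j^* h_j$ --- the factor $\tfrac12$ in $P$ being absorbed by the symmetrization --- while the $D_\off$-contribution collapses to $\sum_{l=1}^{2} A_l^* k_l$, the factor $\tfrac13$ in $Q$ being cancelled by the three identical columns paired against $\underline 1$. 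The parity assignments ($B_j^*, h_l$ real; $A_l^*, k_l$ Grassmannian) are exactly those for which the two terms of the symmetrized expression add with equal sign rather than cancel, so that both $D_\diag$ and $D_\off$ contribute non-trivially. Comparing with \eqref{eq: azione triv} gives $S_\aux = \sum_{j=1}^{3} B_j^* h_j + \sum_{l=1}^{2} A_l^* k_l$, as claimed.
\end{proof}
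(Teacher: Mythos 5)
Your proposal is correct and follows essentially the same route as the paper: a direct evaluation of the two inner products $\langle J_{\aux}(\underline{1}), D_{\aux}\chi\rangle$ and $\langle J_{\aux}\chi, D_{\aux}(\underline{1})\rangle$, whose symmetrized average yields $\sum_j B_j^* h_j + \sum_l A_l^* k_l$, with the first claim deduced from Lemma \ref{algebra per BV auxiliary sp tr} and Proposition \ref{sp triple bosonica e fermionica per sp triple BV auxiliary}. The only cosmetic difference is that the paper records the intermediate cross terms $\mp\tfrac{i}{3}A_l^* h_j$ explicitly and lets them cancel in the symmetrization, whereas you fold that cancellation into the phrase ``collapses to''; the substance is identical.
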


\begin{proof}
For a generic vector $\chi$ in $\mathcal{H}_{\aux, f}$, $\chi = [ i h_{1}, i h_{2}, i h_{3}, i h_{4}, k_{1}, k_{2}]^{T},$ one computes that
\begin{align*}
\langle J_{\aux}(\underline{1}), D_{\aux}(\chi) \rangle &= \sum_{j,l} B^{*}_{j}h_{j} + A^{*}_{l}k_{l} - \tfrac{i}{3}A^{*}_{l}h_{j}\\
 \langle J_{\aux}(\chi), D_{\aux}(\underline{1})\rangle &=  \sum_{j,l} h_{j} B^{*}_{j} -  k_{l}A^{*}_{l} + \tfrac{i}{3}h_{j}A^{*}_{l} 
\end{align*}
with $j=1, 2, 3$, $l=1, 2$. It then follows that 
$$
\frac{1}{2} \left[\langle J_{\aux}(\underline{1}), D_{\aux}(\chi) \rangle + \langle J_{\aux}(\chi), D_{\aux}(\underline{1})\rangle \right] = \sum_{j=1}^{3} B^{*}_{j}h_{j} + \sum_{l=1}^{2} A^{*}_{l}k_{l},
 $$
\noindent
whose right-hand side coincides with $S_{\aux}$ as defined in (\ref{eq: azione triv}).  
\end{proof}
As expected, the BV auxiliary spectral triple has a similar structure to the one already found for the BV spectral triple: 
\begin{itemize}
 \item  The anti-fields $B^{*}_{j}$ and $A^{*}_{l}$ appear as entries of the operator $D_\aux$ while the fields $h_{j} $ and $k_{l}$ are the components of the vectors in subspace $\H_{\aux,f}$. 
 \item The parities of the fields and anti-fields are a consequence of the structure of the real spectral triple, except perhaps for the parity of the anti-fields $ B^{*}_{l}$.
 \end{itemize}

\section{A possible approach to BV spectral triples}
\label{Section: Outlook}
\noindent
The procedure presented in this paper allows to describe the BV construction of a given gauge theory in the setting of noncommutative geometry. Even though we have restricted ourselves to the case of a $U(2)$-gauge invariant matrix model, our results suggest a possible way on how to proceed in a more general setting. Indeed, let $(X_{0}, S_{0})$ be the gauge theory derived from a finite spectral triple $(\mathcal{A}, \mathcal{H}, D)$ along the lines of Section \ref{Section: The noncommutative geometry setting}. Then the BV formalism gives rise to an extended theory $(\widetilde{X}, \widetilde{S})$ which one tries to capture by a BV spectral triple
$$(\mathcal{A}_{\BV}, \mathcal{H}_{\BV}, D_{\BV}, J_{\BV}).$$
The properties that this spectral triple should satisfy are 
\begin{enumerate}
\item The algebra $\A_\BV$ coincides with $\A$;
\item The Hilbert space $\H_\BV$ is spanned by the gauge fields and all ghost fields;
\item The real structure selects the hermitian variables in $\H_\BV$.
\end{enumerate}
Of course, the main challenge is now to find the form of the operator $D_\BV$ in terms of the anti-fields for which the fermionic action coincides with the BV action functional. One of the problems to overcome here is that the BV action might have terms of order higher than 2 in the ghost fields, requiring the introduction of some sort of {\em multilinear fermionic action}. A first analysis of this is in progress for $U(n)$-matrix models with $n>2$.

\bibliographystyle{plain}

\begin{thebibliography}{99}
\bibitem{BV1}
I.A. Batalin and G.A. Vilkovisky.
\newblock Gauge algebra and quantization.
\newblock {\em Phys. Lett.}, B 102, (1981), 27-31.

\bibitem{BV2}
I.A. Batalin and G.A. Vilkovisky.
\newblock Quantization of gauge theories with linearly dependent generators.
\newblock {\em Phys. Rev.}, D28 (1983), 2567-2582, Erratum, D30 (1984), 508.

\bibitem{BRS}
C.M. Becchi, A.~Rouet, and R.~Stora.
\newblock Renormalization of gauge theories.
\newblock {\em Ann. Phys.}, 98, 2 (1976), 287-321.

\bibitem{BRS3}
C.M. Becchi, A.~Rouet, and R.~Stora.
\newblock Renormalization of the abelian {H}iggs-{K}ibble model.
\newblock {\em Commun. Math. Phys.}, 42, (1975), 127-162.

\bibitem{beringgrosse}
  K.~Bering and H.~Grosse.
  \newblock On Batalin-Vilkovisky Formalism of Non-Commutative Field Theories.
  \newblock {\em Eur.\ Phys.\ J.\ }C {\bf 68} (2010) 313.

\bibitem{CC97}
A.H. Chamseddine and A.~Connes.
\newblock The spectral action principle.
\newblock {\em Commun. Math. Phys.}, 186, (1997), 731-750.

\bibitem{CC96}
A.H. Chamseddine and A.~Connes.
\newblock Universal formula for noncommutative geometry actions: unifications
  of gravity and the standard model.
\newblock {\em Phys. Rev. Lett.}, 77, (1996), 4868-4871.

\bibitem{CCM07}
A.H. Chamseddine, A.~Connes and M.~Marcolli.
\newblock Gravity and the {S}tandard {M}odel with neutrino mixing.
\newblock {\em Adv. Theor. Math. Phys.}, 11, (2007), 991-1089.

\bibitem{C96}
A.~Connes.
\newblock Gravity coupled with matter and the foundation of non-commutative
  geometry.
\newblock {\em Commun. Math. Phys.}, 182, (1996), 155-176.

\bibitem{Connes}
A.~Connes.
\newblock {\em Noncommutative geometry}.
\newblock Academic Press, San Diego, (1994).

\bibitem{marcolli}
A.~Connes and M.~Marcolli.
\newblock {\em Noncommutative geometry, quantum fields and motives}.
\newblock Amer. Math. Soc., Colloquium Publications 55, (2008).

\bibitem{Faddev-Popov}
L.D. Faddeev and V.N. Popov.
\newblock Feynman diagrams for the {Y}ang-{M}ills field.
\newblock {\em Phys. Lett.}, B 25, (1967), 29-30.

\bibitem{felder}
G.~Felder and D.~Kazhdan.
\newblock The classical master equation.
\newblock In {\em Perspectives in Representation Theory, Contemporary
  Mathematics, \em{Eds P. Etingof, M. Khovanov, and A. Savage, (2014)}}.

\bibitem{Fior}
D.~Fiorenza.
\newblock An introduction to the {B}atalin-{V}ilkovisky formalism.
\newblock {\em Comptes Rendus des Rencontres Mathématiques de Glanon}, (2003).

\bibitem{GPS}
J.~Gomis, J.~París, and S.~Samuel.
\newblock Antibracket, antifields and gauge-theory quantization.
\newblock {\em Phys. Rep.}, 259, (1995), 1-145.

\bibitem{hanlonjoshi}
  B.~E.~Hanlon and G.~C.~Joshi,
  \newblock BRS symmetry in Connes' noncommutative geometry.
  \newblock {\em J.\ Phys.\ } A {\bf 28} (1995) 2889.

\bibitem{huffel2001}
  H.~Huffel.
  \newblock Quantizing Yang-Mills theory on a two point space.
\newblock {\em J.\ Math.\ Phys.\ }{\bf 43} (2002) 2348.

\bibitem{huffel2002}
  H.~Huffel.
\newblock BV quantization of a noncommutative Yang-Mills theory toy model.
\newblock {\em Acta Phys.\ Slov.\ }{\bf 52} (2002) 247.

\bibitem{primo_articolo}
R.A. Iseppi.
\newblock The {BV} formalism: theory and application to a {U(2)}-matrix model.
\newblock Work in progress.

\bibitem{Schw}
A.~Schwarz.
\newblock Geometry of {B}atalin-{V}ilkovisky quantization.
\newblock {\em Commun.Math.Phys.}, 155, (1993), 249-260.

\bibitem{walter}
W.{D. }van Suijlekom.
\newblock {\em Noncommutative geometry and particle physics}.
\newblock Springer, Mathematical Physics Studies, (2015).

\bibitem{T}
I.V. Tyutin.
\newblock Gauge invariance in field theory and statistical physics in operator
  formalism.
\newblock {\em Preprint of P.N. Lebedev Physical Institute}, 39, (1975).

\end{thebibliography}

\end{document}